\definecolor{mcomment}{RGB}{0,153,76}
\title{\LARGE \bf
Pricing for Multi-modal Pickup and Delivery Problems with Heterogeneous Users}
\author{Mark Beliaev, Negar Mehr, Ramtin Pedarsani
	\thanks{M. Beliaev is with Graduate School of Electrical \& Computer Engineering, University of California Santa Barbara, Santa Barbara, CA, USA.}
	\thanks{N. Z. Mehr is with the Faculty of Aerospace Engineering, University of Illinois at Urbana-Champaign, Champaign, IL, USA.}
    \thanks{R. Pedarsani is with the Faculty of Electrical \& Computer Engineering, University of California Santa Barbara, Santa Barbara, CA, USA.}}
\newtheorem{thm}{Theorem}
\newtheorem{remark}{Remark}
\newtheorem{definition}{Definition}
\newtheorem{prop}{Proposition}
\newtheorem{cor}{Corollary}
\newcommand{\order}{i}
\newcommand{\Orders}{\mathcal{I}}
\newcommand{\mode}{j}
\newcommand{\Modes}{\mathcal{J}}
\newcommand{\numModes}{J}
\newcommand{\setmodes}{\{1,\ldots,\numModes\}}
\newcommand{\xij}{x_{\order,\mode}}
\newcommand{\xijp}{x_{\order',\mode'}}
\newcommand{\xijpp}{x_{\order'',\mode''}}
\newcommand{\flow}{x_\order}
\newcommand{\nashflow}{\tilde{x}_\order}
\newcommand{\setflow}{\{\xij\}_{\mode\in\Modes}}
\newcommand{\setnflow}{\{\tilde{x}_{\order,\mode}\}_{\mode\in\Modes}}
\newcommand{\tax}{\tau}
\newcommand{\tij}{\tax_{\order,\mode}}
\newcommand{\settax}{\{\tij\}_{\mode\in\Modes}}
\newcommand{\lat}{\ell}
\newcommand{\lij}{\lat_{\order,\mode}}
\newcommand{\lijp}{\lat_{\order',\mode'}}
\newcommand{\lijpp}{\lat_{\order'',\mode''}}
\newcommand{\setlat}{\{\lij\}_{\mode\in\Modes}}
\newcommand{\tlat}{L}
\newcommand{\pop}{\alpha}
\newcommand{\user}{a}
\newcommand{\cost}{p}
\newcommand{\cija}{\cost_{\order,\mode}^{\user}}
\newcommand{\cixa}{\cost_{\order,x_\order(\user)}^{\user}}
\newcommand{\opcj}{c_{\mode}}
\newcommand{\bopcost}{C}
\newcommand{\loc}{d}
\newcommand{\rest}{r}
\newcommand{\traveltime}{t}
\newcommand{\pickuptime}{u}
\newcommand{\service}{s}
\newcommand{\bij}{\beta_{\order,\mode}}
\newcommand{\slij}{\service_{\order,\mode}}
\newcommand{\plij}{\pickuptime_{\order,\mode}}
\newcommand{\tlij}{\traveltime_{\order,\mode}}
\newcommand{\drate}{\mu}
\newcommand{\minutes}{k}
\newcommand{\couriercap}{N}
\begin{document} 
	
	\maketitle
	\begin{abstract}
	In this paper, we study the pickup and delivery problem with multiple transportation modalities, and address the challenge of efficiently allocating transportation resources while price matching users with their desired delivery modes. More precisely, we consider that orders are demanded by a heterogeneous population of users with varying trade-offs between price and latency. To capture how prices affect the behavior of heterogeneous selfish users choosing between multiple delivery modes, we construct a congestion game taking place over a form of star network, where each source-sink pair is composed of parallel links connecting users with their preferred delivery method. Using the unique geometry of this network, we prove that one can set prices explicitly to induce any desired network flow, i.e, given a desired allocation strategy, we have a closed-form solution for the delivery prices. We conclude by performing a case study on a meal delivery problem with multiple courier modalities using data from real world instances.     
	\end{abstract}
	\section{Introduction}\label{sec: Introduction}
	As the world continues to integrate with digital technology, we become more reliant on e-commerce services such as food delivery and ride-hailing. The global food delivery market has seen exponential growth, with the most mature markets becoming four to seven times larger from 2018 to 2021~\cite{ahuja2021ordering}. In 2022, Uber reported a $19\%$ year-over-year increase in online bookings, marking a daily average of $23$ million trips on their platform~\cite{business_wire_2023}. Despite this growth, many pickup and delivery services operate under low profit margins due to high driver wages~\cite{CDC_pravin}.\par 
	
	To fulfill market demands and mitigate these costs, recent efforts have been made to introduce autonomous transportation methods for food delivery and ride hailing, such as delivery drones, electric vertical takeoff and landing (eVTOL) aircrafts, and sidewalk autonomous delivery robots (SADRs)~\cite{MOSHREFJAVADI2021114854,robot_starship}. For example, Archer Aviation Inc. has recently revealed their plans to provide passengers with an eVTOL aircraft travel option between O’Hare International Airport and Vertiport Chicago that takes roughly $10$ minutes, a trip which can take upwards of an hour or more during rush hour traffic~\cite{archer}.\par 
	
	As these modalities are introduced, it is important for service providers to develop new resource allocation strategies that efficiently utilize emergent transportation modalities, while coinciding with customer preferences. With the advent of urban air mobility, recent research has studied demand modeling, operations, and integration with existing infrastructure~\cite{GARROW2021103377}. For example, surveys examining commuter preferences regarding transportation have found significant heterogeneity in individuals' value of time, and that the median value of time for air taxis is larger compared to other modalities~\cite{binder_2018,garrow_2019}. As the inclusion of urban air mobility is predicted to disrupt urban transportation, it will be crucial for existing rideshare providers to adapt their pricing and allocation strategies. In light of these developments, we address the challenge of using customer preferences to set prices that efficiently allocate transportation resources amongst them.\par
	
	This paper examines the pickup and delivery problem with multiple transportation modalities, and demonstrates how one can achieve a desired allocation strategy for a set of orders by appropriately setting prices for each modality. Specifically, we consider orders demanded by a heterogeneous population of users with varying trade-offs between price and latency. This problem is analogous to a congestion game taking place over a form of star network: one central \textit{source} node is used to connect a set of \textit{sink} nodes, where for each source-sink pair, there is a directed graph composed of parallel links between the source and sink node. This way, we can use the star network to represent a central delivery system responsible for connecting users placing some order, with their preferred transportation modality. Note that by using parallel links to represent the modalities, we are not concerned with routing individual vehicles, and instead focus on allocating transportation resources. We illustrate this congestion game in Figure~\ref{fig:front_fig}, depicting a meal delivery problem where customers at different locations place orders for food via the available delivery modes. This unique network structure enables us to show that we can explicitly define prices to induce any desired network flow, i.e, given a desired allocation strategy, we have a closed-form solution for the delivery prices.\par
	
	\begin{figure}[!t]
		\caption{We represent the pickup and delivery problem as a congestion game played over a star network. {Each source-sink pair is} denoted by $\order\in\Orders$, which can be viewed as a population of users at some location demanding a particular order at a certain rate. Each source-sink pair is connected by a set of parallel edges $\mode\in\Modes$, which can be viewed as the set of delivery modes the users choose from. Note that we are not concerned with how the couriers are routed to the pickup or delivery location, and instead focus on how we allocate the different delivery modes for each order. Specifically, our goal is to induce an optimal allocation of transportation modalities by appropriately setting prices for each order-modality pair.}
		\includegraphics[width=0.8\columnwidth]{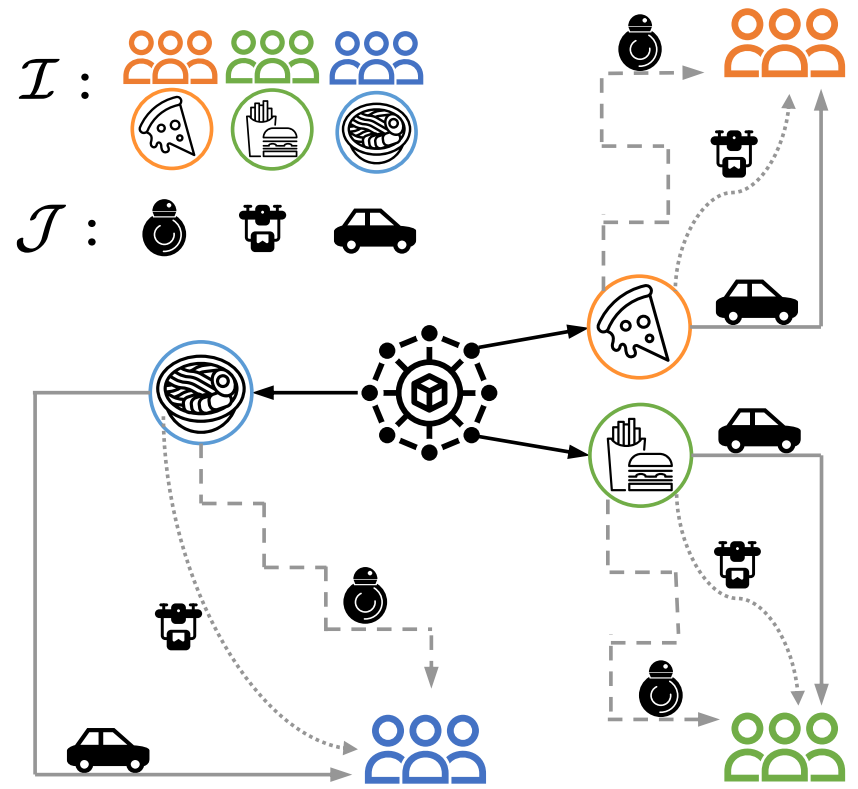}
		\label{fig:front_fig}
	\end{figure}
	
	The main contributions of this work are:
	\begin{itemize}
		\item We construct a congestion game that captures how prices affect the behaviour of heterogeneous selfish users choosing between multiple delivery modes.
		\item Building on results from prior works, we prove that in these settings, the set of prices can be explicitly defined for any desired network flow.
		\item We demonstrate our results with a case study on a meal delivery service with multiple courier modalities, using real world instances provided by Grubhub~\cite{reyes2018meal}. 
		\item Under additional assumptions, we extend our results by allowing users to have varying trade-offs for each modality. We demonstrate this with a case study on a taxi service with urban air transportation to and from the O'Hare International Airport, using data provided by the city of Chicago~\cite{TNP2023}.
	\end{itemize}
	
	\section{Related Work}
	The application of emerging transportation modalities such as unmanned aerial vehicles or drones has drawn a lot of attention. Many works look at how drones can be utilized in logistic operations such as delivery systems~\cite{beliaev2022}, urban air taxi~\cite{airtaxi}, on-demand meal delivery~\cite{LIU20191}, as well as many other applications~\cite{MOSHREFJAVADI2021114854}. Other works look at safety verification for dynamical systems utilizing drones to account for factors such as collision avoidance~\cite{coogan_2022} and schedule feasibility~\cite{coogan_2021}. The pickup and delivery vehicle routing problem with drones has also been considered by some, where mixed integer linear programming models are used to find routing solutions for optimizing various objectives~\cite{pdp_routing_1,pdp_routing_2}. Unlike these works, our research lies in the broader field of congestion games, specifically building on previous works that consider pricing in non-atomic congestion games.\par 
	
	Congestion games aim to allocate traffic over transportation networks represented by graphs, where each road corresponds to an edge with a latency function representing the travel time experienced by users on that edge~\cite{dafermos1969traffic}. In these settings, one aims to find the optimal network flow that minimizes a social cost, such as the aggregate latency experienced by all users. However, if we assume that users are self-interested and choose their routes selfishly by minimizing their individual latency, the resulting flow follows a network equilibrium~\cite{wardrop1952, sheffi1984}. One area of research is focused on categorizing the trade-off in social cost between the optimal network flow and the equilibrium network flow~\cite{roughgarden2005, lazar_tac, Lazar_Coogan}. Many works specifically look at how tolling can be used to price network edges such that the equilibrium network flow corresponds to the desired optimal flow~\cite{dafermos1973, roughgarden2003b, marden2017}. In our work, we make the distinction that users are heterogeneous in their trade-off between price and time.\par
	
	While in the homogeneous case it has been long known that marginal cost pricing can guarantee that the equilibrium flow equals the optimal flow~\cite{beckmann1956studies}, this strategy does not hold for heterogeneous populations. More recent research has demonstrated that for directed graphs with one source-sink pair, optimal tolls exist and can be found by solving a polynomial size set of linear inequalities, given that the number of users in the heterogeneous population is finite~\cite{roughgarden2003a}. In this seminal paper, it was assumed that the model was nonatomic, meaning that each user corresponded to an infinitesimal unit of flow, and inelastic, meaning that the demand could not change as a function of the road parameters. Following this work, others have improved the result by considering multicommodity networks~\cite{Karakostas2004EdgePO,Fleischer2004}, allowing user demand to be elastic~\cite{Karakostas2006EdgePO}, and addressing the atomic setting~\cite{Fotakis2010OnTE}. In this paper, we keep the assumption of a nonatomic model with inelastic demands, but consider a graph structure which is unique to the pickup and delivery problem considered. By exploiting this graph structure, we can define prices explicitly to induce any desired network flow without limiting it to an optimal flow. Whereas prior works directly use Linear Program (LP) formulations to find edge prices in general directed graphs, our theoretical results imply that one can first find path prices combinatorially to simplify the LP formulation.\par 
	
	The rest of the paper is organized as follows. In the subsequent Section~\ref{sec: Problem Setup}, we formally introduce the problem setting and show how it is analogous to a congestion game. Following this, in Section~\ref{sec: Theoretical Framework} we describe our theoretical result in the general framework of the aforementioned congestion game, and in Section~\ref{sec: Model}, we describe the specific framework that is used to model the pickup and delivery problem and find the optimal allocation strategy. We go on to apply our theoretical results on this framework in Section~\ref{sec: Case Studies}, which consists of our two case studies using the public Grubhub dataset~\cite{reyes2018meal} for meal delivery and the Chicago Transportation Network Providers dataset~\cite{TNP2023} for taxi services. Lastly, we conclude our work in Section~\ref{sec: Conclusion}, listing potential avenues for improvement and further research.\par
	
	\section{Problem Formulation}\label{sec: Problem Setup}
	We model our pickup and delivery problem using a static system, where during a given time interval\footnote{Without loss of generality, we can define the time interval during which the orders $\Orders$ are demanded as one hour, using the same unit of time for all variables and constants throughout our formulation.}, there is a set of orders $\Orders$ demanded by a population of users. We consider that each order originates from a unique neighborhood composed of a heterogeneous population represented by the interval $[0,1]$, where each point $\user\in[0,1]$ is a non-cooperative and infinitesimal unit referred to as a user. We sort these users by money sensitivity, where in general, we can view $\pop_\order:[0,1]\rightarrow(0,\infty)$ as an unbounded, non-decreasing function representing the trade-off between pisc corresponding to order $\order$. Thus, when placing an order $\order\in\Orders$, each user chooses one of the $\numModes$ delivery modes $\mode\in\Modes:\{1,\ldots,\numModes\}$ based on latency $\lij$, dollar price $\tij$, and their money/time valuation $\pop_\order(\user)$. We assume users placing order $\order\in\Orders$ have inelastic demands, i.e., they will not switch their demand to a different order and will always choose one of the $\numModes$ delivery modes. Our goal is then to find the set of delivery prices which would induce some desired allocation of users between the delivery modes $\mode\in\Modes$ for each order $\order\in\Orders$.\par
	
	\begin{table}[!h]
		\centering 
		\caption{Notations}
		\centering
		\begin{tabular}{r c p{10cm} }
			\toprule
			$\Orders$ $|$ $\order\in\Orders$ & $\triangleq$ & Set of orders $|$ individual order (source-sink pair) \\
			$\Modes$ $|$ $\mode\in\Modes$ & $\triangleq$ & Set of modes $|$ individual mode (edge)\\
			$\numModes$ & $\triangleq$ & Number of modes, i.e., the cardinality of $\Modes$ \\
			$\user\in[0,1]$ & $\triangleq$ & A non-cooperative and infinitesimal unit referred to as a user \\
			$\pop_{\order}:[0,1]\rightarrow(0,\infty)$ & $\triangleq$ & Function representing the trade-off between price and time for users placing order $\order$.\\
			$\lij$ & $\triangleq$ & Latency: total time required to complete order $\order$ using mode $\mode$ \\
			$\tij$ & $\triangleq$ & Dollar price of placing order $\order$ using mode $\mode$ \\
			$\xij$ & $\triangleq$ & Flow of users on edge $\mode$ corresponding to source-sink pair $\order$ \\
			$\flow:[0,1]\rightarrow\Modes$ & $\triangleq$ & Function representing flow over the edges $\Modes$ for source-sink pair $\order$\\
			$\cija$ & $\triangleq$ & User cost (hours) $\user\in[0,1]$ assigns to edge $\mode$ for source-sink pair $\order$ \\
			$\slij$ $|$ $\tlij$ $|$ $\plij$& $\triangleq$ & Service time $|$ travel time $|$ pickup time (order $\order$, mode $\mode$)\\
			$\rest_\order$ $|$ $\loc_\order$ & $\triangleq$ & Pickup location $|$ drop-off location (for order $\order$) \\
			$\couriercap_\mode$ & $\triangleq$ & the total number of vehicles for mode $\mode$ \\
			$\drate_\mode$ & $\triangleq$ & Order completion rate for mode $\mode$ in units of orders per hour \\
			$\rho_\mode$  $|$ $\bar{\rho}_\mode$ & $\triangleq$ & Utilization of mode $\mode$ $|$ upper bound on utilization of mode $\mode$  \\
			$\bij$ & $\triangleq$ & Portion of available couriers distributed around pickup location $\rest_\order$ such that their travel times are uniform in $[0,\minutes_\mode]$ \\
			$\minutes_\mode$ & $\triangleq$ & Constant unit of time used for computing $\bij$ \\
			$\opcj$ & $\triangleq$ & Cost in dollars for completing one order with mode $\mode$ \\
			$\bar{c}_\mode$ & $\triangleq$ & Cost in dollars per hour for operating mode $\mode$ \\
			$\bopcost$ & $\triangleq$ & Cost in dollars per hour for operating the entire system \\
			\bottomrule
		\end{tabular}
		\label{tab:TableOfNotation}
	\end{table}
	
	This problem is analogous to a congestion game played over a star network as portrayed in Figure~\ref{fig:front_fig}, where each order $\order\in\Orders$ corresponds to a source-sink pair connected by a set of parallel edges $\Modes$ representing the different delivery options. Each source-sink pair $\order\in\Orders$ has an associated demand of traffic flow at the sink which represents the population of users $\user\in[0,1]$ requesting deliveries. Although we model this flow demand with the unit interval to simplify notation, we can allow for an arbitrary demand $r_\order$ at each source-sink pair $\order$. The edge corresponding to modality $\mode\in\Modes$ for source-sink pair $\order\in\Orders$ has a congestion dependent latency $\lij$, which represents the time needed to complete the order, and a price issued to control congestion $\tij$, which represents the dollar price paid by the user, both of which are assumed to be nonnegative. Note that when we drop index $\mode$ from the notation of terms like $\lij$ by writing $\lat_\order$, we refer to the set of latencies $\setlat$ over all the edges $\Modes$ for a given source-sink pair $\order$.\par 
	
	With this approach, we can view network flow as an allocation of users over the delivery modes. To represent such allocation strategies, we define $0\leq \xij\leq1$ as the flow of users on edge $\mode\in\Modes$ corresponding to source-sink pair $\order\in\Orders$, where $\sum_{\mode\in\Modes}\xij=1$ must be satisfied. More precisely, for each source-sink pair $\order$ we view this flow as a Lebesgue-measurable function $\flow:[0,1]\rightarrow\Modes$ which corresponds to a flow over the edges $\setflow$. We use notations $x=\{\xij\}_{\order\in\Orders,\mode\in\Modes}$ and $\tax=\{\tij\}_{\order\in\Orders,\mode\in\Modes}$ to denote the entire set of edge flows and edge prices, respectively. As we will later show in Section~\ref{sec: Model} when defining the specific optimization problem, we can use $x$ as a decision variable to find an optimal allocation strategy for a given objective, and explicitly define prices $\tax$ that induce this desired strategy. For now, we continue to detail how latency and user equilibrium are considered in our framework.\par
	
	\subsection{Congestion}
	We first describe the congestion element of our framework, namely, the latency function defined for each edge. Specifically, we assume that each edge $\mode\in\Modes$ corresponding to source-sink pair $\order\in\Orders$ has a nonnegative and continuous latency $\lij$ as a function of the entire network flow $x$. Each latency function $\lij$ describes the time it takes for an order $\order$ delivered by modality $\mode$ to arrive at the customer's location from the moment it was placed. We note that in order to claim our main theoretical result, we do not need any further restrictions on the latency functions $\lij$. We leave further discussion regarding latency to Section~\ref{sec: Model}, where we model latency using concepts from queuing theory for our application. Until then, we stick with the aforementioned assumptions and simply use notation $\lij(x)$ when defining edge latency.\par
	
	\subsection{User Equilibrium}
	We are now ready to discuss how users choose between the different delivery modes. When confronted with a set of prices $\tax_\order$ and latencies $\lat_\order$ for the varying edge options $\mode\in\Modes$, user $\user\in[0,1]$ will choose the edge with the smallest cost $\lij(x)+\pop_\order(a)\tij$. Essentially, every source-sink pair $\order\in\Orders$ corresponds to its own nonatomic game in which users $\user\in[0,1]$ choose between the $\mode\in\Modes$ pure strategies available. The non-cooperative behaviour of users results in a Nash equilibrium, which is a stable point where no user has an incentive to unilaterally alter their chosen strategy. Specifically, we let $\cija(x,\tax_\order)=\lij(x)+\pop_\order(\user)\tij$ represent the evaluation user $\user\in[0,1]$ assigns to edge $\mode$ for source-sink pair $\order$.
	\begin{definition}
		For a given source-sink pair $\order\in\Orders$, we call the flow $\flow:[0,1]\rightarrow\Modes$ an equilibrium or Nash flow for instance $(\pop_\order,\lat_\order,\tax_\order)$ if for any user $\user\in[0,1]$ and edge $\mode\in\Modes$:
		\begin{equation}\label{eq:Nash_condition}
			\cixa(x,\tax_\order)\leq \cija(x,\tax_\order).
		\end{equation}
	\end{definition}
	The existence of such Nash flows is a well known and a general result~\cite{schmeidler_equilibrium}. 
	\begin{prop}\label{prop: ex_nash}
		For a given source-sink pair $\order\in\Orders$, any instance $(\pop_\order,\lat_\order,\tax_\order)$ admits a Nash flow $\flow:[0,1]\rightarrow\Modes$ satisfying Eq.~\eqref{eq:Nash_condition}. 
	\end{prop}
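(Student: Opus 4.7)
The plan is to reduce the per-pair equilibrium problem to the standard existence theorem for nonatomic games and invoke it directly. Because each user demanding order $\order$ chooses only among the $\numModes$ modes available to $\order$, and because the cost $\cija(x,\tax_\order)=\lij(x)+\pop_\order(\user)\tij$ depends on the aggregate flow solely through the continuous latencies $\lij(x)$ and on the user solely through the scalar type $\pop_\order(\user)$, the equilibrium selection problem at pair $\order$ (with flows on the other pairs held fixed) is itself a nonatomic game with player set $[0,1]$ under Lebesgue measure, finite pure-strategy set $\Modes$, and continuous aggregate-dependent payoffs. This is exactly the setting treated in~\cite{schmeidler_equilibrium}, so existence of a measurable pure-strategy Nash equilibrium $\flow:[0,1]\to\Modes$ satisfying Eq.~\eqref{eq:Nash_condition} follows by direct citation.

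If a self-contained argument is preferred, I would instead use a Brouwer fixed-point on the simplex $\Delta=\{y\in\reals^{\numModes}_{\geq 0}:\sum_\mode y_\mode=1\}$. For each candidate $y\in\Delta$, let $F(y)_\mode$ be the Lebesgue measure of the set of users in $[0,1]$ who best-respond to mode $\mode$ when latencies are evaluated at $y$. Since $\pop_\order$ is nondecreasing and the cost is affine in $\pop_\order(\user)$ with slope determined by the edge price, the best-response region for each mode is an interval in $[0,1]$; its endpoints are ratios of latency gaps to price gaps, which vary continuously in $y$. Hence $F$ is continuous, Brouwer yields a fixed point $y^\star$, and assigning each user a best-response at $y^\star$ produces the required Nash flow.

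The only delicate point in either route is the measure-theoretic handling of ties. A user $\user$ is indifferent between modes $\mode$ and $\mode'$ precisely when $\pop_\order(\user)(\tij-\tijp)=\lijp(y)-\lij(y)$; whenever $\tij\neq\tijp$ this pins $\pop_\order(\user)$ to a single value, and monotonicity of $\pop_\order$ forces the indifferent set to have Lebesgue measure zero, so tie-breaking can be carried out arbitrarily without violating Eq.~\eqref{eq:Nash_condition}. Packaging this subtlety in generality is precisely the service provided by~\cite{schmeidler_equilibrium}, so the cleanest route is the direct appeal to that theorem after verifying its continuity hypotheses.
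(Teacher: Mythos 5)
Your primary route --- checking the continuity hypotheses and then citing Schmeidler --- is exactly what the paper does: Proposition~\ref{prop: ex_nash} is given no proof there beyond the remark that existence ``is a well known and a general result'' with the same citation. So the first paragraph of your proposal is correct and matches the paper's approach.

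One caution about your self-contained Brouwer alternative: the claim that the indifferent set has Lebesgue measure zero because $\pop_\order$ is monotone is not valid in general. The paper only assumes $\pop_\order$ is non-decreasing, so it may be constant on an interval of positive measure; the preimage under $\pop_\order$ of the single ratio $(\lijp(y)-\lij(y))/(\tij-\tijp)$ can then have positive measure, and when $\tij=\tijp$ and $\lij(y)=\lijp(y)$ every user is simultaneously indifferent. This does not threaten existence, but it does mean your map $F$ into the simplex is either not well defined or not continuous under an arbitrary tie-breaking rule. The standard repair is to pass to the best-response correspondence and invoke Kakutani rather than Brouwer (noting that any split of the indifferent mass among tied modes still satisfies Eq.~\eqref{eq:Nash_condition}). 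Since the paper leans entirely on the citation, this only affects your optional second route, not the result.
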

	\par
	We point out that the above Proposition requires not only the cost function $\cija$ to be nonnegative and continuous, but also the set of possible flows to be a nonempty, convex, and compact. This is trivially satisfied when considering only the demand, as done when presenting our theoreitcal results in Section~\ref{sec: Theoretical Framework}. However, we will need to make sure the set is nonempty when including a supply constraint, as done when formulating our optimization problem in Section~\ref{sec: Model}. 
	
	Note that in the above results pertaining to Nash equilibria, for each source sink-pair $\order\in\Orders$, we consider the flow $\flow$ independently from the entire network flow $x$, keeping the remaining flows constant. Since the latency $\lij(x)$ is assumed to be a function of the entire network flow $x$, one may require a network flow $x:\{\flow\}_{\order\in\Orders}$ for which all source sink pairs $\order\in\Orders$ exhibit Nash equilibrium under their corresponding flow $\flow:[0,1]\rightarrow\Modes$. We use the term \textit{stable allocation strategy} to encompass this notion, formally defining it below. 
	\begin{definition}\label{def:nash}
		For a given star network defined by the source-sink pairs $\order\in\Orders$ and edges $\mode\in\Modes$, we call the network flow $x:\{\flow\}_{\order\in\Orders}$ a stable allocation strategy for instance $(\pop,\lat,\tax)$ if for all source-sink pairs $\order\in\Orders$, the corresponding flow $\flow:[0,1]\rightarrow\Modes$ is an equilibrium flow satisfying Eq.~\eqref{eq:Nash_condition}.  
	\end{definition}
	As we show in the subsequent section, any network flow $x$ is a stable allocation strategy for some set of prices $\tax$.\par
	
	\section{Theoretical Results}\label{sec: Theoretical Framework}
	Before stating our main results, we need to elaborate on one more property of equilibrium flows that applies to individual source-sink pairs. Intuitively, we expect Nash flows to exhibit a structure where users $\user\in[0,1]$ close to $0$, who value time more than money, will choose an option with small latency but large price. Similarly, users further away from $0$ will choose an option with a relatively larger latency but a smaller price. Finally, users close to $1$ will choose an option with very large latency in order to pay a very small price. We encapsulate this notion below. 
	\begin{definition}\label{def:canon}
		For a given source-sink pair $\order\in\Orders$, a flow $\flow$ at Nash equilibrium is \textit{canonical} if:
		\begin{itemize}
			\item For any edge $\mode\in\Modes$, the users assigned to $\mode$ form a possibly empty or degenerate subinterval of $[0,1]$.
			\item If $\user_1<\user_2$, then $\lat_{\order,\flow(\user_1)}(x)\leq\lat_{\order,\flow(\user_2)}(x)$.
			\item If $\user_1<\user_2$, then $\tax_{\order,\flow(\user_1)}\geq\tax_{\order,\flow(\user_2)}$.
		\end{itemize}
	\end{definition}
	
	\begin{figure}[!h]
		\centering
		\begin{tikzpicture}[scale=7]
			\draw[-, thick] (0,0) -- (1,0);
			\foreach \x/\xtext in {0/0,0.3/$\user_{\mode-1}$,0.5/$\user_{\mode}$,0.7/$\user_{\mode+1}$,1/$1$}
			\draw[thick] (\x,0.5pt) -- (\x,-0.5pt) node[below] {\xtext};
			\draw (0.15,-1pt) node {$\ldots$};
			\draw (0.85,-1pt) node {$\ldots$};
			\draw[[-), ultra thick, blue] (0.3,0) -- (0.5,0);
			\draw[[-), ultra thick, red] (0.5,0) -- (0.7,0);
			\draw (0.4,3pt) node {$\tax_\mode$};
			\draw (0.4,1.5pt) node {$\lat_\mode$};
			\draw (0.6,3pt) node {$\tax_{\mode+1}$};
			\draw (0.6,1.5pt) node {$\lat_{\mode+1}$};
			\draw (0.5,3pt) node {$\geq$};
			\draw (0.5,1.5pt) node {$\leq$};
		\end{tikzpicture}
		\caption{A sketch depicting how a canonical Nash flow splits the population $\user\in[\user_0,\user_{\numModes}]$ into subintervals $[\user_{\mode-1},\user_\mode): x(\user)=\mode$, where $\user_0=0$, $\user_{\numModes}=1$, and $\mode\in\Modes$. Note that order $\order$ is left out from notation.} 
		\label{fig:proof intervals}
	\end{figure}
	
	In other words, a canonical Nash flow $\flow$ splits $[0,1]$ into at most $\numModes$ potentially degenerate sub intervals, inducing an ordering over the edges to which $\flow$ assigns users that is nondecreasing in latency and nonincreasing in prices. We portray this in Fig.~\ref{fig:proof intervals}. Using results from prior work which proposed this definition~\cite{roughgarden2003a}, we can state the following existence property, providing an independent proof of this proposition in \ref{sec: app_prop_proof} for completeness.
	
	\begin{prop}\label{prop: ex_canon}
		For a given source-sink pair $\order\in\Orders$, every instance $(\pop_\order,\lat_\order,\tax_\order)$ admits a canonical Nash flow. 
	\end{prop}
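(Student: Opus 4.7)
The plan is to start from an arbitrary Nash flow $\flow$, whose existence is guaranteed by Proposition~\ref{prop: ex_nash}, and then perform a measure-preserving rearrangement of the assignment of users to edges that (i) leaves the aggregate edge flows $\setflow$ unchanged and (ii) produces the three canonical properties. Because aggregate flows are preserved, the latency values $\lij(x)$ do not move, so the equilibrium inequality~\eqref{eq:Nash_condition} only has to be rechecked at the new user--edge pairings.

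First I would identify the set $\Modes^+ \subseteq \Modes$ of edges receiving positive flow under $\flow$. For any two edges $\mode,\mode' \in \Modes^+$, standard Nash reasoning against~\eqref{eq:Nash_condition} rules out the case that $\mode$ strictly dominates $\mode'$ in both latency and price: if $\lij(x) \leq \lijp(x)$ and $\tij \leq \tijp$ with one inequality strict, then every user assigned to $\mode'$ would strictly prefer $\mode$, contradicting positive flow on $\mode'$. Consequently, the edges of $\Modes^+$ admit a total ordering $\mode_1,\mode_2,\ldots,\mode_m$ in which $\lat_{\order,\mode_k}(x)$ is nondecreasing and $\tax_{\order,\mode_k}$ is nonincreasing in $k$ (ties in latency can be broken by ordering by price, and vice versa). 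For each consecutive pair $\mode_k,\mode_{k+1}$, the indifference threshold
\begin{equation*}
  \pop^{k} \;=\; \frac{\lat_{\order,\mode_{k+1}}(x) - \lat_{\order,\mode_k}(x)}{\tax_{\order,\mode_k} - \tax_{\order,\mode_{k+1}}}
\end{equation*}
marks the money sensitivity above which $\mode_{k+1}$ is (weakly) preferred to $\mode_k$, with $\pop^{k} \leq \pop^{k+1}$ by the same dominance argument applied in pairs.

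Next I would define the canonical flow $\tilde{x}_\order$ by letting $\user_k \in [0,1]$ be the aggregate measure $x_{\order,\mode_1}+\cdots+x_{\order,\mode_k}$ and assigning all users in the subinterval $(\user_{k-1},\user_k]$ to edge $\mode_k$ (setting $\user_0 = 0$). By construction the aggregate edge flows match $\setflow$, so the latencies are unchanged. Because $\pop_\order$ is nondecreasing, any user placed on $\mode_k$ has $\pop_\order(\user) \in [\pop_\order(\user_{k-1}),\pop_\order(\user_k)]$, and I would check that this interval lies inside $[\pop^{k-1},\pop^{k}]$ using the fact that the \emph{original} Nash flow must have had at least one user at each such sensitivity level choosing $\mode_k$; this places each new user in the region where $\mode_k$ minimizes $\cija(x,\tax_\order)$, giving~\eqref{eq:Nash_condition}. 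The three bullet conditions of the canonical definition then follow directly: users on each $\mode_k$ form a subinterval, latencies are nondecreasing along $\user$, and prices are nonincreasing.

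The main obstacle I anticipate is justifying in a rigorous way that the indifference thresholds $\pop^k$ from the original Nash flow correctly \emph{partition} $[0,1]$ after reindexing by sensitivity --- in particular, handling ties where two edges happen to share latency or price, and handling the fact that $\pop_\order$ is only assumed unbounded and nondecreasing (so it need not be strictly increasing or continuous). I would address ties by collapsing edges with identical $(\lij(x),\tij)$ into a single ordered slot, and handle flat regions of $\pop_\order$ by noting that users on a level set of $\pop_\order$ are interchangeable with respect to~\eqref{eq:Nash_condition}, so any measurable reassignment within such a level set preserves the equilibrium. With these technicalities cleared, the reshuffled flow $\tilde{x}_\order$ satisfies the three canonical properties while inheriting the Nash property from the preserved aggregate flows, completing the proof.
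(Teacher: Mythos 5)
The paper itself does not prove this proposition---it imports it from \cite{roughgarden2003a}---so there is no in-paper proof to match against. Your plan is essentially the argument used in that cited work: start from an arbitrary Nash flow, observe that no two positive-flow edges can dominate one another, sort the surviving edges by latency (equivalently, reverse price), and reassign users to edges in order of money sensitivity while preserving the aggregate flows, hence the latencies. That skeleton is correct, and your treatment of ties (collapsing identical edges) and of flat regions of $\pop_\order$ is the right way to handle the degeneracies.

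The one step whose stated justification does not hold up is the claim that $\pop_\order(\user)\in[\pop^{k-1},\pop^{k}]$ for $\user\in(\user_{k-1},\user_k]$ ``because the original Nash flow must have had at least one user at each such sensitivity level choosing $\mode_k$.'' Nothing forces the original flow to realize every sensitivity level on $\mode_k$: users in the interior of their indifference region may be scattered arbitrarily, and $\pop_\order$ may be constant on sets of positive measure. The step is true, but for a different, measure-counting reason: any user with $\pop_\order(\user)>\pop^{k}$ strictly prefers $\mode_{k+1}$ to $\mode_k$, and since $\pop^{1}\le\cdots\le\pop^{k}$ (and prices strictly decrease along the order after collapsing ties), the chain of strict preferences shows $\mode_{k+1}$ beats all of $\mode_1,\ldots,\mode_k$ for such a user; hence in the original Nash flow these users sit on $\mode_{k+1},\ldots,\mode_m$ and have total measure at most $1-\user_k$. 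Because $\{\user:\pop_\order(\user)>\pop^{k}\}$ is a terminal subinterval of $[0,1]$, it must be contained in $[\user_k,1]$, giving $\pop_\order(\user)\le\pop^{k}$ for all $\user<\user_k$ (the single boundary point $\user_k$ may need to be shifted to $\mode_{k+1}$, which changes no aggregate flow); the lower bound is symmetric. Two smaller omissions: the ordering $\pop^{k}\le\pop^{k+1}$ follows not from dominance but from $\mode_{k+1}$ carrying positive flow (otherwise no sensitivity weakly prefers it to both neighbors), and you must also check Eq.~\eqref{eq:Nash_condition} against the edges \emph{outside} $\Modes^+$---this works because your reassignment gives each $\user$ the minimum cost over $\Modes^+$, and the original Nash condition for that same $\user$ already bounds this minimum by the cost of every zero-flow edge. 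With those repairs the proof goes through.
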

	
	With these properties, we can say that for a given source-sink pair $\order\in\Orders$ and instance $(\pop_\order,\lat_\order,\tax_\order)$, there exists a canonical Nash flow $\nashflow:[0,1]\rightarrow\Modes$. This canonical  Nash flow represents the flow $\setnflow$, where users in interval $[\user_{\mode-1},\user_\mode]\in[0,1]$ are routed on edge $\mode$ for some corresponding set $\user_0\leq\user_1\leq\ldots\leq\user_{\numModes}$, with $\user_0=0$ and $\user_{\numModes}=1$. In the pickup and delivery setting, we can assume that the delivery provider already has a set of flows $\setflow$ representing the desired allocation strategy for order $\order$, and wants to find a corresponding set of prices $\settax$ such that the induced equilibrium flow $\setnflow$ is equal to the desired flow. Building on top of the aforementioned results, we find a closed-form solution to this problem.
	\begin{thm}
		\label{thm:pricing condition}
		For a given source-sink pair $\order\in\Orders$, any desired flow $\setflow$ is an equilibrium flow for instance $(\pop_\order,\lat_\order,\tax_\order)$, where the set $\Modes:\setmodes$ orders the edges by non-decreasing latency, $\pop_\order:[0,1]\rightarrow(0,\infty)$ is a non-decreasing distribution function, $\lat_\order$ is the set of corresponding edge latencies, and $\tax_\order$ is the set of prices defined by:
		\begin{equation}\label{eq:main_result}
			\tij=\tax_{\order,\numModes}+\sum_{k=\mode}^{\numModes-1}\frac{\lat_{\order,k+1}-\lat_{\order,k}}{\pop_\order(\user_{k})}\qquad\forall\mode\in\Modes,
		\end{equation}
		where $\tax_{\order,\numModes}$ is any predefined price for the cheapest option. 
	\end{thm}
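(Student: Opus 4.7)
My plan is to derive Eq.~\eqref{eq:main_result} from the indifference conditions that must hold at the boundary points of a canonical Nash flow, and then verify that the prices so obtained satisfy Eq.~\eqref{eq:Nash_condition} globally. I would start by fixing the desired flow $\setflow$ and identifying the boundary points $\user_0 = 0 \leq \user_1 \leq \ldots \leq \user_{\numModes} = 1$ determined by $\user_k = \sum_{j=1}^{k} x_{\order,j}$, so that users in $[\user_{\mode-1}, \user_\mode]$ are routed to edge $\mode$. Since the edges are indexed by non-decreasing latency, the canonical structure of Proposition~\ref{prop: ex_canon} suggests prices should be non-increasing in $\mode$. At each interior boundary $\user_k$, the marginal user must be indifferent between adjacent edges $k$ and $k+1$, giving $\lat_{\order,k} + \pop_\order(\user_k)\tax_{\order,k} = \lat_{\order,k+1} + \pop_\order(\user_k)\tax_{\order,k+1}$. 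Rearranging into the recurrence $\tax_{\order,k} - \tax_{\order,k+1} = (\lat_{\order,k+1} - \lat_{\order,k})/\pop_\order(\user_k)$ and telescoping from $\mode$ up to $\numModes-1$ with anchor $\tax_{\order,\numModes}$ yields Eq.~\eqref{eq:main_result}.

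The second step is to verify that, with these prices, every user $\user \in [\user_{\mode-1}, \user_\mode]$ weakly prefers edge $\mode$ to any alternative $\mode' \in \Modes$. For adjacent edges the indifference at $\user_\mode$ together with monotonicity of $\pop_\order$ gives, for any $\user \leq \user_\mode$,
\begin{equation*}
\bigl(\lat_{\order,\mode+1} + \pop_\order(\user)\tax_{\order,\mode+1}\bigr) - \bigl(\lat_{\order,\mode} + \pop_\order(\user)\tax_{\order,\mode}\bigr) = \bigl(\pop_\order(\user_\mode) - \pop_\order(\user)\bigr)\bigl(\tax_{\order,\mode} - \tax_{\order,\mode+1}\bigr) \geq 0,
\end{equation*}
since the price gap is non-negative (latencies are ordered) and $\pop_\order(\user_\mode) \geq \pop_\order(\user)$. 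The symmetric argument at $\user_{\mode-1}$ shows edge $\mode$ is preferred to edge $\mode-1$. For a general $\mode' > \mode$, telescoping the cost difference along consecutive edges expresses it as $\sum_{k=\mode}^{\mode'-1}(\pop_\order(\user_k) - \pop_\order(\user))(\tax_{\order,k} - \tax_{\order,k+1})$, with every summand non-negative because $k \geq \mode$ implies $\user_k \geq \user_\mode \geq \user$. The case $\mode' < \mode$ is handled analogously by telescoping downward.

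The main delicate point will be handling degenerate sub-intervals where $\user_{k-1} = \user_k$, i.e., edges carrying no flow. The formula still prescribes a price $\tax_{\order,k}$ using $\pop_\order(\user_k)$, but the Nash condition for such an edge is vacuously satisfied since no user is assigned to it, and the telescoping identities remain consistent regardless of the particular value of $\pop_\order$ at the shared boundary. Once the equilibrium verification is established for a single source--sink pair, extending it across all $\order \in \Orders$ immediately yields the theorem.
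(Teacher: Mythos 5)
Your proposal is correct and follows essentially the same route as the paper's proof: derive the adjacent-edge recurrence $\tax_{\order,k}-\tax_{\order,k+1}=(\lat_{\order,k+1}-\lat_{\order,k})/\pop_\order(\user_k)$ from the Nash/indifference conditions at the boundary points, telescope to obtain Eq.~\eqref{eq:main_result}, and then verify the full set of inequalities in Eq.~\eqref{eq:Nash_condition} for non-adjacent edges via a telescoped sum whose terms are non-negative by monotonicity of $\pop_\order$. Your factored form $(\pop_\order(\user_k)-\pop_\order(\user))(\tax_{\order,k}-\tax_{\order,k+1})$ is just a cleaner packaging of the paper's term-by-term comparison, and your remark on degenerate subintervals is a welcome detail the paper leaves implicit.
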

	\begin{proof}
		The proof strategy is as follows: 
		given the result of Proposition~\ref{prop: ex_canon} which states that every instance $(\pop_\order,\lat_\order,\tax_\order)$ admits a canonical Nash flow, we use the properties of canonical Nash flows along with a subset of the inequalities defined for Nash equilibrium in Eq.~\eqref{eq:Nash_condition} to show that for some desired flow $\setflow$ to be at equilibrium, there is only one set of valid prices $\tax_\order$. We complete the proof by showing that the corresponding set of prices $\tax_\order$ does indeed satisfy all of the inequalities defined in Eq.~\eqref{eq:Nash_condition}. The full proof is provided in \ref{sec: app_proof}.
	\end{proof}
	It follows directly that given any network flow $x:\{\flow\}_{\order\in\Orders}$ representing a desired allocation strategy over all orders, one can independently set prices $\tax:\{\tax_\order\}_{\order\in\Orders}$ for each source-sink pair to make $x$ a stable allocation strategy.
	\begin{cor}\label{corr: ex_network_flow}
		For a given star network defined by source-sink pairs $\order\in\Orders$ and edges $\mode\in\Modes$, any network flow $x:\{\flow\}_{\order\in\Orders}$ is a stable allocation strategy for instance $(\pop,\lat,\tax)$ when the set of prices $\tax$ is defined according to Eq.~\eqref{eq:main_result}.
	\end{cor}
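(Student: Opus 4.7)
The plan is to reduce Corollary~\ref{corr: ex_network_flow} to Theorem~\ref{thm:pricing condition} by exploiting the decoupling of source--sink pairs in the star topology. Because each user has inelastic demand and may only choose among the parallel edges $\mode\in\Modes$ within their own source--sink pair $\order\in\Orders$, the Nash inequalities in Eq.~\eqref{eq:Nash_condition} at pair $\order$ only reference the latencies $\lat_\order$ and prices $\tax_\order$ local to that pair. Consequently, verifying that a network flow $x$ is a stable allocation strategy decomposes into an independent per--pair check for every $\order\in\Orders$.

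First, I would fix the desired network flow $x=\{\flow\}_{\order\in\Orders}$ once and for all, and for every $\order\in\Orders$ evaluate the latency vector $\lat_\order(x)=\{\lij(x)\}_{\mode\in\Modes}$ at this flow. Although each $\lij$ may a priori depend on the entire network flow, its value is frozen once $x$ is specified, so the per--pair construction treats each $\lij(x)$ as a fixed scalar. Within each pair I relabel the edges so that $\lat_{\order,1}(x)\leq\cdots\leq\lat_{\order,\numModes}(x)$, placing the instance $(\pop_\order,\lat_\order,\tax_\order)$ into the exact form required by Theorem~\ref{thm:pricing condition}. Picking an arbitrary $\tax_{\order,\numModes}$ and defining the remaining prices through Eq.~\eqref{eq:main_result} then produces a price vector $\tax_\order$ under which $\flow$ is a Nash equilibrium for $(\pop_\order,\lat_\order,\tax_\order)$. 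Assembling these per--pair price vectors yields the global price profile $\tax=\{\tax_\order\}_{\order\in\Orders}$, and the simultaneous validity of the Nash inequalities across all pairs is precisely the stable allocation condition.

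The only subtlety, and what I would expect to be the main conceptual obstacle if one were not careful, is the global dependence of the latencies on the full network flow: in principle, changing prices could alter equilibrium flows at some source--sink pair and thereby shift the latencies seen at another, creating a circular dependence. This obstacle is neutralized by the order of the construction: prices are defined \emph{after} $x$ is fixed, and since prices never appear inside any latency function, no feedback loop between pairs is possible. Hence the per--pair application of Theorem~\ref{thm:pricing condition} carries through without requiring any fixed-point or simultaneous-consistency argument, which completes the reduction.
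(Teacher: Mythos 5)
Your proposal is correct and follows essentially the same route as the paper, which treats the corollary as an immediate consequence of applying Theorem~\ref{thm:pricing condition} independently to each source--sink pair. Your explicit observation that the latencies $\lij(x)$ are frozen once the target flow $x$ is fixed---so no cross-pair feedback arises---is a subtlety the paper leaves implicit, but it does not change the argument.
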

	
	We note that under additional assumptions, the results of Theorem~\ref{thm:pricing condition} and Corollary~\ref{corr: ex_network_flow} can be extended to the setting where users have different trade-offs between price and latency $\pop_{\order,\mode}$ for different modes of transportation $\mode$ as well as orders $\order$.
	
	\begin{cor}\label{corr: varying modality}
		For a given source-sink pair $\order\in\Orders$ and instance $(\pop_\order,\lat_\order,\tax_\order)$, if a desired flow $\setflow$ is inducible under some equilibrium flow $\nashflow:[0,1]\rightarrow\Modes$ that routes users in interval $[\user_{\mode-1},\user_\mode]\in[0,1]$ on edge $\mode$ for some corresponding set $\user_0\leq\user_1\leq\ldots\leq\user_{\numModes}$, then the set of prices $\tax_\order$ must be defined by:
		\begin{equation}\label{eq:extended_main}
			\tij=\frac{\pop_{\order,\mode+1}(\user_\mode)}{\pop_{\order,\mode}(\user_\mode)}\tax_{\order,\mode+1} + \frac{\lat_{\order,\mode+1}-\lij}{\pop_{\order,\mode}(\user_\mode)} \qquad \forall\mode\in\{1,\ldots,\numModes-1\},
		\end{equation}
		where $\user_0=0$ and $\user_{\numModes}=1$, $\pop_{\order}$ is a set of non decreasing functions $\pop_{\order,\mode}:[0,1]\rightarrow(0,\infty)$ for $\mode\in\setmodes$ such that given $\user<\user'$, $\frac{\pop_{\order,\mode+1}(\user)}{\pop_{\order,\mode}(\user)}\geq \frac{\pop_{\order,\mode+1}(\user')}{\pop_{\order,\mode}(\user')}$ for all $\mode\in\{1,\ldots,\numModes-1\}$, $\lat_\order$ is the set of corresponding edge latencies, and $\tax_{\order,\numModes}$ is any predefined price for the cheapest option. It follows directly that if the network flow $x:\{\flow\}_{\order\in\Orders}$ is a stable allocation strategy for instance $(\pop,\lat,\tax)$, then the set of prices $\tax$ is defined by Eq.~\eqref{eq:extended_main}.
	\end{cor}
	
	The proof is provided in \ref{sec: app_corr_proof}, where we can no longer rely on Proposition~\ref{prop: ex_canon} that allows us to utilize the properties of canonical Nash flows. Due to this, we only consider well behaved equilibrium flows which divide the users into $\numModes$ potentially degenerate sub intervals, inducing an ordering over the edges which preserves the assumption placed on the distribution functions $\pop_{\order,\mode}$. Intuitively, this assumption requires the modalities $\setmodes$ ordered by their relative luxury, since users $\user\in[0,1]$ closer to $0$, who have a greater value of time overall, will have a proportionally higher value of time for more luxurious modes. Hence, Corollary~\ref{corr: varying modality} tells us that if the desired flow $\setflow$ can be induced by such an equilibrium flow, then the prices $\settax$ must follow Eq.~\ref{eq:extended_main}. Although this is a weaker result compared to Theorem~\ref{thm:pricing condition}, it can be strengthened and used in applicable settings as we will demonstrate in our second case study provided in Section~\ref{sec: Case Studies}.\par
	
	Lastly, we point out that so far we have not made any claims regarding the uniqueness of equilibrium flows. While the uniqueness of Nash flows for networks composed of parallel links is a known result (see Theorem 1 in~\cite{orda1993} or Proposition 3.3 in~\cite{milchtaich2000}), it requires additional assumptions: convexity and strict monotonicity of cost functions $\cija$ with respect to flow $\xij$. To keep our results in Theorem~\ref{thm:pricing condition}, as well as Corollaries~\ref{corr: ex_network_flow}and~\ref{corr: varying modality}, general for any arbitrary latency functions $\lij$, we forego making such restrictions. However, we note that the latency function $\lij$ used in our case studies satisfies the aforementioned requirements. Furthermore, due to the results of Proposition~\ref{prop: ex_canon}, any two allocation strategies $x$ and $x'$ that are stable for instance $(\pop,\lat,\tax)$, must induce the same ordering over the edges $\mode\in\Modes$ that is nondecreasing in latency $\lij$, and nonincreasing in prices $\tij$, for all source-sink pairs $\order\in\Orders$.

	\section{Pickup and Delivery Problem }\label{sec: Model}
	To show the usability of our model, we apply our theoretical framework to the pickup and delivery problem with multiple courier types. Our goal is to find the optimal allocation strategy with respect to some objective, where we will use Theorem~\ref{thm:pricing condition} to set the prices which induce this desired strategy. Our objective will be to find the optimal values of $x$ which minimize the expected latency over all orders:
	\begin{equation}\label{eq: expected ctd}
		\tlat(x)=\frac{1}{|\Orders|}\sum_{\order\in\Orders}\sum_{\mode\in\Modes}\lij(x)\xij.
	\end{equation}
	Using the model and optimization problem developed in this section, we will perform case studies on  both a meal delivery service and a taxi service in the following Section~\ref{sec: Case Studies}.  Before we set up and solve this optimization problem, we first specify how latency is measured, and how cost is accounted for.\par
	
	\subsection{Latency Model}
	We begin by characterizing each order $\order\in\Orders$ by a $2$--tuple $\langle\rest_\order,\loc_\order\rangle$, consisting of a pick-up and drop-off location, respectively. We would like our system to model the time it takes to complete a customer's order from the moment it was placed. We refer to this as the latency $\lij$ for order $\order\in\Orders$ and modality $\mode\in\Modes$, computing it as:
	\begin{equation}\label{eq:click-to-door}
		\lij(x)=\slij+\tlij+\plij(x).
	\end{equation}
	Essentially, the above Eq.~\eqref{eq:click-to-door} splits the latency $\lij$ into three components: service time $\slij$, travel time $\tlij$, and pickup time $\plij$ for modality $\mode$ of order $\order$.\par 
	
	In total, to compute the latency $\lij$ of an order, we account for how long it takes a courier to arrive at the designated pickup location $\plij$, the travel time between the pickup and drop-off locations $\tlij$, and the service time required $\slij$. We view the service time $\slij$ as a constant representing the time spent at the pickup and drop-off locations when completing order $\order$ using modality $\mode$. Some examples of this include parking for vehicle couriers, landing for aerial couriers, loading, and unloading. Similarly, we define the travel time $\tlij$ as the time it takes to physically travel between pickup $\rest_\order$ and drop-off $\loc_\order$ locations using modality $\mode$. The travel time $\tlij$ between locations can be pre-computed separately for each modality $\mode$ and order $\order$ using some known functions. Lastly, we view the pick-up time $\plij$ as the time it takes for a courier of modality $\mode$ to arrive at pick-up location $\rest_\order$. Unlike the other two components, the time required for pickup $\plij$ should depend on the availability of couriers captured by our decision variable $x$, as well as the expected travel time between the pickup location and nearest available courier.\par 
	
	To account for the availability of couriers, we use the concept of server utilization from queuing theory. Specifically, we use the $M/M/c$ queue as an approximate model for the availability of couriers since we can obtain closed form formulas for the average order arrival and order completion rates. For a given modality $\mode$, we set $c$ to be the total number of couriers $\couriercap_\mode$, approximate the rate at which users are placing orders as $\sum_{\order\in\Orders}\xij$, and define the rate at which an order is completed by these types of couriers as $\drate_\mode$. Note that we can define the order completion rate $\drate_\mode$ as a constant provided by historical data, or estimate it using the parameters of our problem instance as we will do in the case studies following. Drawing these analogies allows us to define the utilization $\rho_\mode$ of our queuing system for couriers of modality $\mode$ as:
	\begin{equation}\label{eq:system utilization}
		\rho_\mode = \frac{\sum_{\order\in\Orders}\xij}{\couriercap_\mode\drate_\mode}.
	\end{equation}
	\par
	In our regime of interest, the rate of order arrivals is magnitudes larger than the rate of order completions, and hence the number of available couriers $c$ needs to be large. Using the $M/M/c$ latency function, one can show that in this regime of interest, the time spent waiting for an available server is negligible unless we are close to the capacity limit~\cite{queueing_theory}. For example, given a system with $c=50$ servers and a demand of $100$ requests per hour, when the server utilization is high at $\rho=0.99$, the average time spent in the system is $84$ minutes, with $55$ minutes in the queue. Once we lower the utilization to $\rho=0.9$, the average time spent in the system is $29$ minutes, with only $2$ minutes spent in the queue. This means that the expected waiting time for a courier to be available is relatively small compared to the latency required to complete the order, given that the utilization parameter $\rho_\mode$ is below a reasonable threshold. Thus, to make sure that customers are not experiencing long wait times for couriers to respond, we can upper-bound the utilization parameter $\rho_\mode$ for all courier types, and ignore the effect of varying availability.\par
	
	To model the time a courier must spend traveling to the pick-up location $\rest_\order$, we take a probabilistic approach by calculating the expected travel time of the nearest available courier. Specifically, we assume that for modality $\mode$, some portion $\bij\in(0,1]$ of available couriers are distributed around the pick-up location $\rest_\order$ such that their travel times are uniform in $[0,\minutes_\mode]$. Note that we can choose $\minutes_\mode$ as some constant unit of time from which $\bij$ is estimated based on the pick-up location and modality. Since we know that the expected number of available couriers will be $(1-\rho_\mode)\couriercap_\mode$, we can define the pick-up time as the expected travel time of the nearest courier:
	\begin{equation}\label{eq:pick-up time}
		\plij(x)=\frac{\minutes_\mode}{1+\bij\couriercap_\mode(1-\rho_\mode)},
	\end{equation}
	where we used the fact that the expected minimum value of $n$ independent uniform random variables in $[0,1]$ is $\frac{1}{n+1}$.\par 
	
	Before continuing, we want to note that although our latency function $\lij(x)$ defined in Eq.~\eqref{eq:click-to-door} does not account for traffic congestion on the road, this is not an inherent limitation of our model. Specifically, we choose to set the travel time $\tlij$ as constant in order to focus our latency model on the congestion caused by courier utilization, as captured by the pick-time $\plij$ defined in Eq.~\eqref{eq:pick-up time}. Since we expect couriers to be a small percentage of total traffic, we believe this simplification to be justified. Nonetheless, given that our theoretical results in Section~\ref{sec: Theoretical Framework} make no restrictions to the latency function $\lij$, and the fact that our current formulation of latency $\lij$ is non-linear and non-convex, one can consider a model for the travel time $\tlij$ which accounts for congestion.\par
	
	\subsection{Cost Model}
	Before setting up our optimization problem, we need to model the cost of operating this system. We define the average dollar cost of completing a single order using a courier of modality $\mode$ as the delivery cost $\opcj$. This way, we can define the total cost of running our delivery system given the allocation strategy $x$:
	\begin{equation}\label{eq:total cost 1}
		\bopcost(x) = \sum_{\mode=1}^{\numModes}\opcj\big(\sum_{\order\in\Orders}\xij\big),
	\end{equation}
	where $\bopcost(x)$ is units of dollars per hour because $\xij$ is a rate of orders per hour. We use this model in our first case study concerning the meal delivery service. Alternatively, one can define a cost model using fixed hourly wages $\bar{c}_\mode$ for different courier modalities $\mode$, making the total cost of our system independent of the allocation strategy $x$:
	\begin{equation}\label{eq:total cost 2}
		\bopcost = \sum_{\mode=1}^{\numModes}\bar{c}_\mode\couriercap_\mode,
	\end{equation}
	where $\bar{c}_\mode$ is now in units of dollars per hour. Whereas we only account for completed orders in Eq.~\eqref{eq:total cost 1}, by modeling couriers using average wages in Eq.~\eqref{eq:total cost 2} we are accounting for the unused couriers. We use this model in our second case study concerning the taxi service as information regarding wages is readily available. In practice, more sophisticated cost models can be utilized by addressing statistics such as profit margins, travel distance for couriers, and other information that is available to the service provider.\par
	
	\subsection{Optimization Problem}
	We are now ready to set up the overall optimization problem. 
	\begin{align}		
		\min_{x}&\qquad \tlat(x)=\frac{1}{|\Orders|}\sum_{\order\in\Orders}\sum_{\mode\in\Modes}\lij(x)\xij\label{eq:final_optimization}\\
		\textrm{subject to }&\qquad \bopcost(x) \leq \sum_{\order\in\Orders}\sum_{\mode\in\Modes}\tij\xij,\label{eq:con_cost}\\
		&\qquad  \rho_\mode(x)\leq\Bar{\rho} \qquad \forall \mode\in\Modes,\label{eq:con_server}\\
		&\qquad \sum_{\mode\in\Modes}\xij=1,\qquad \forall\order\in\Orders,\label{eq:con_flow}\\
		&\qquad 0\leq\xij\leq1,\qquad \forall\order\in\Orders,\mode\in\Modes\label{eq:con_bounds}.
	\end{align}
	For this case study, we want to find the allocation strategy $x$ which minimizes expected latency $\tlat$, as shown in Eq.~\eqref{eq:final_optimization}. In addition, we constrain the operational cost in Eq.~\eqref{eq:con_cost} to be less than the total compensation received from all deliveries. Note that because Theorem~\eqref{thm:pricing condition} allows us to arbitrarily set price for the cheapest delivery option, one can always set the minimum allotted price to satisfy the constraint after optimization. We also constrain courier utilization in Eq.~\eqref{eq:con_server} by choosing an appropriate upper bound $\Bar{\rho}$ for all modalities $\mode\in\Modes$. Note that with this additional supply constraint, we assume that the number of available couriers $\couriercap_\mode$ is large enough so that the set of possible solutions is nonempty, as required by Proposition~\ref{prop: ex_nash}. We use the constraint in Eq.~\eqref{eq:con_flow} to satisfy demands for each order $\order\in\Orders$. Finally, we bound our decision variable between the domain of $[0,1]$ in Eq.~\eqref{eq:con_bounds} so that there are no negative values in the solution. Once we find a desirable allocation strategy by solving this optimization problem, we can set prices using our theoretical results such that this flow is induced at equilibrium.\par
	
	The optimization problem defined above is non-linear and non-convex, and we use a public implementation of the interior-point filter line-search algorithm~\cite{wachter2006implementation} to solve it, noting that this method can be used to solve nonlinear programs on the order of a million variables~\cite{ipopt_effic}. As aforementioned, many choices can be made for the formulation of the latency functions $\lij$, cost constraint $\bopcost$, and the optimization objective $\tlat$. To efficiently use the interior point method, it is desired for the objective function and constraints to be twice differentiable so that the Hessian can be defined. We include details regarding this implementation in \ref{sec: app_implementation}, and provide our code online~\cite{my_implementation}.\par 
	
	
	\section{Case Studies}\label{sec: Case Studies}
	We can now discuss the setting and results in our case studies. We first model a meal delivery system with three transportation modalities: cars, delivery drones, and and sidewalk autonomous delivery robots (SADRs). In this setting, we model each population corresponding to order $\order$ with a trade-off function $\pop_{\order}$ that is independent of the modalities provided. Our second setting considers a taxi service transporting customers to and from the Chicago O'Hare International Airport (ORD) using three transportation modalities: cars, luxury cars, and electric vertical takeoff and landing (eVTOL) aircrafts. Unlike the first study where users are simply ordering food, we now model our populations with different trade-off functions $\pop_{\order,\mode}$ for each modality. We list how the problem parameters are defined and our results below, providing the full implementation in our code online~\cite{my_implementation}.\par
	
	\subsection{Meal Delivery: Grubhub Instance}
	\noindent\textbf{Setup.}
	To define the problem parameters for our optimization formulation, we used real world instances from Grubhub~\cite{reyes2018meal}, which list information about the orders placed and car couriers available throughout a given time interval. Although there is no consideration of other modalities, we use the provided information as a basis and define our remaining parameters to be consistent. For service time $\slij$, we directly used the given pickup and dropoff times. Similarly for travel time $\tlij$, we used the provided distances between restaurant and customer locations, converting them to time by using constant speeds for all modalities. For cars, we set the speed to $11.93$ mph according to the dataset. For drones, we set the speed to $30$ mph, using the upper end of the reported range of $13-34$ mph~\cite{MARCINA2020} since food deliveries are relatively light. Finally for SADRs, we set the speed to $4$ mph, as they are expected to operate at the typical speed of pedestrians~\cite{SADRspeed}.\par 
	
	To calculate pickup time $\plij$, we computed all the parameters required in Eq.~\eqref{eq:pick-up time}. The number of couriers $\couriercap$ was directly chosen for each instance so that the problem was feasible under the utilization capacities $\Bar{\rho}_\mode$. For cars and SADRs, we set $\Bar{\rho}_\mode$ to $0.9$, while for drones, we decreased this value to $0.8$ due to the smaller number of vehicles utilized. We then generated courier locations for all three modalities, and computed the portion of available couriers $\bij$ that were at most $\minutes=10$ minutes away from the restaurant corresponding to order $\order$. For car couriers, we directly sampled from the provided locations, while for drone couriers, we sampled uniformly from a grid spanning the restaurant locations. To capture SADRs delivering from restaurants closer to downtown, we sampled their locations uniformly from a grid centered in the middle of all restaurant locations, with length and width equal to their coordinate's respective standard deviations. Using these parameters, we estimated the mean rate $\drate_\mode$ of order completions as the inverse of expected latency $\mathbb{E}_{i}[\lij]^{-1}$ for each modality $\mode$, assuming load was equally distributed across them.\par
	
	To compute the operational costs $\opcj$ for each modality, we set the cost per order to $\$5$ for car and drone deliveries as they are expected to be competitive under certain regimes~\cite{MckinseyDronesToSky}, while using a lower cost per order of $\$1.50$ for SADRs as the current cost per order is expected to drop from $\$2$ to $\$1$ in the near future~\cite{SADRcost}. For user trade-off between price and time $\pop(a)$, we used a linear function with the lowest evaluation $\pop(1)$ set to $\$10$ per hour, and the highest $\pop(0)$ set to $\$100$ per hour, for all orders $\order\in\Orders$. We go on to discuss the results of our case study for an instance with $505$ unique orders, each demanded with an equal rate of approximately $0.54$ deliveries per hour.\par
	
	\begin{table}[!h]
		\centering
		\caption{Results for a meal delivery system with $100$ car couriers when there is only one modality available. The total operational cost is $\$1,375$ per hour.}
		\label{tab:result_1}
		\centering
		\begin{tabular}{lc}
			\toprule
			& Cars \\
			\cmidrule(r){1-1}\cmidrule(r){2-2}\\
			Orders $(\%)$ & $100$ \\
			Utilization $\rho_j$ $(\%)$ & $88$\\
			Latency $\lat_j$ (min) & $19$ \\
			Distance (miles) & $1.47$ \\
			Price $\tax_j$ $(\$)$ & $5.00$ \\
			\bottomrule 
		\end{tabular}
	\end{table}
	
	\begin{table}[!h]
		\centering
		\caption{Results for a meal delivery system with $20$ car, $24$ drone, and $100$ SADR couriers available, with a total operational cost of $\$856.84$ per hour.}
		\label{tab:result_2}
		\centering
		\begin{tabular}{lcccc}
			\toprule
			& Cars & Drones & SADRs & Total\\
			\cmidrule(r){1-1}\cmidrule(r){2-4}\cmidrule(r){5-5}
			Orders $(\%)$ & $17$ & $29$ & $54$ & $100$\\
			Utilization $\rho_j$ $(\%)$ & $90$ & $80$ & $88$ & $86$\\
			Latency $\lat_j$ (min) & $24$ & $15$ & $26$ & $23$\\
			Distance (miles) & $1.69$ & $2.42$ & $0.88$ & $1.47$\\
			Price $\tax_j$ $(\$)$ & $4.13$ & $7.08$ & $0.65$ & $3.12$\\
			\bottomrule 
		\end{tabular}
		
	\end{table}
	\noindent\textbf{Results.}
	We first consider the case when there are only $100$ car couriers available, with no other transportation modality. We show the result in Table~\ref{tab:result_1}, listing the portion of orders delivered and the courier utilization as percentages, the average latency $\lat_j$ for modality $\mode$ in minutes, the distance between customer and restaurant in miles, and finally the delivery price in dollars. For this setting, prices were set so that the money accrued from delivery services was equal to the operational cost. Since car couriers have an operational cost of $\$5.00$ per order, we need an average delivery price of $\$5.00$ per order to satisfy it. Note that although in this setting the minimum delivery price can be set arbitrarily for all orders since users have no choice to make, when we introduce other delivery modalities this is no longer the case as the prices must follow Eq.~\eqref{eq:main_result} to satisfy Nash conditions.\par 
	
	Next, we consider the case when there are $20$ car, $24$ drone, and $100$ SADR couriers available displayed in Table~\ref{tab:result_2}. With the introduction of low cost SADRs into the system, the average latency has increased from $18$ to $23$ minutes, while the average price has decreased from $\$5.00$ to $\$3.12$ per order. Although a high cost and fast delivery service is now available via drones, customers are expected to pay a premium. These trends are expected for two reasons: (1) SADRs are much cheaper to operate making the total operational cost smaller, and (2) the faster speed of drones allows us to charge users who favor shorter delivery times more than users who favor cheaper delivery prices. It is interesting to note that drones are used to complete orders for customers furthest away from their chosen restaurant, while SADRs are used for customers closest to their chosen restaurant. This is due to the travel speeds of the different transportation modalities, as drones can travel efficiently between distant destinations, while robots are expected to operate in a smaller range as they have a pedestrian pace.\par 
	
	Overall, this case study shows that by setting prices according to users' trade-offs between money and time, one can implement a desired allocation strategy over multiple delivery modalities while improving their profit margins. One flaw with the study above is that we assumed users in different neighborhoods had identical distributions for their value of time. Although we had no information that would allow us to model this discrepancy, we expect that this would have an effect on where drones are utilized, as wealthier customers may reside in areas that are closer to downtown locations. Such an effect is considered in the following case study, where we model users' value of time (VOT) based on previous data.\par 
	
	\subsection{Taxi Service with Urban Air Transportation: Chicago O'Hare International Airport}
	
	\noindent\textbf{Setup.}
	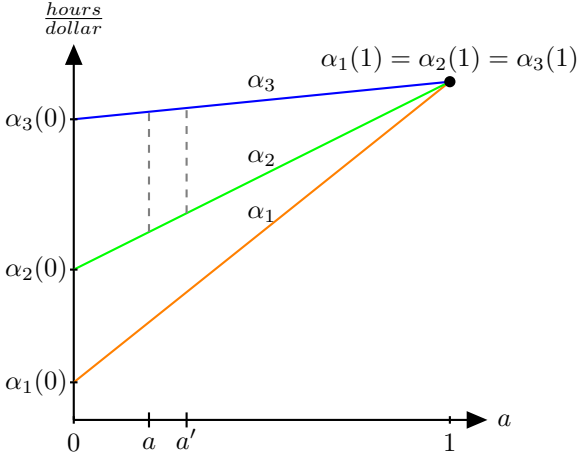
\begin{figure}
		\caption{The above schematic is an example of the user preference functions $\pop_{\order,\mode}$ for three modes $\Modes=\{1,2,3\}$, where the index corresponding to order $\order\in\Orders$ has been left out for convenience. The $x$-axis represents users $\user\in[0,1]$ and the $y$-axis represents hours per dollar, the inverse of the value of time. We can see that the three lines all intersect at $\user=1$, signifying that the most frugal users do not care about the modality and prefer the cheapest option. All three lines have varying slopes, and we sort their indices so that for $\user<\user'$, $\frac{\pop_{\order,\mode+1}(\user)}{\pop_{\order,\mode}(\user)}\geq \frac{\pop_{\order,\mode+1}(\user')}{\pop_{\order,\mode}(\user')}$ for all $\mode\in\{1,\ldots,\numModes-1\}$. The dashed gray lines help visualize this property: as we increase $\user$ to $1$, the ratio $\frac{\pop_{3}(\user)}{\pop_{2}(\user)}$ decreases due to $\pop_2$ having a larger slope than $\pop_3$, reaching a minimum value of $1$ when $\user=1$. Intuitively, this orders modalities by their slopes in decreasing order, representing a decrease in relative luxury. For our setting, this order is eVTOL aircrafts, luxury vehicles, and standard vehicles.}
		\begin{tikzpicture}[scale=5]
			\draw[->, thick] (0,0) -- (1.1,0) node[right]{$\user$};
			\draw[-, thick, black]  (1,.01) -- (1,-0.01) node[below]{$1$};
			\draw[-, thick, black]  (0,.01) -- (0,-0.01) node[below]{$0$};
			\draw[->, thick] (0,0) -- (0,1) node[above]{$\frac{hours}{dollar}$};
			
			\draw[-, thick, orange] (0,0.1) -- (1,0.9) node[midway, above, black]{$\pop_1$};
			\draw[-, thick, black]  (-0.01,0.1) -- (0.01,0.1) node[left]{$\pop_1(0)$};
			
			\draw[-, thick, green] (0,0.4) -- (1,0.9) node[midway, above, black]{$\pop_2$};
			\draw[-, thick, black]  (-0.01,0.4) -- (0.01,0.4) node[left]{$\pop_2(0)$};
			
			\draw[-, thick, blue]  (0,0.8) -- (1,0.9) node[midway, above, black]{$\pop_3$};
			\draw[-, thick, black]  (-0.01,0.8) -- (0.01,0.8) node[left]{$\pop_3(0)$};
			
			\node at (1, 0.9) [circle, fill, inner sep=1.5pt]{};
			\draw (1,0.9) node[above] {$\pop_1(1)=\pop_2(1)=\pop_3(1)$};
			
			\draw[-, thick, gray, dashed]  (0.2,0.5) -- (0.2,0.82);
			\draw[-, thick, gray, dashed]  (0.3,0.55) -- (0.3,0.83);
			\draw[-, thick, black]  (0.2,.02) -- (0.2,-0.02) node[below]{$\user$}; 
			\draw[-, thick, black]  (0.3,.02) -- (0.3,-0.02); 
			\node at (0.3, -0.05) {$\user'$};
			
		\end{tikzpicture}
		\label{fig:alpha schematic}
	\end{figure}
	For the second setting, we consider passengers requesting rides to and from the ORD Airport via three available transportation modalities: cars, luxury cars, and eVTOL aircrafts. Using the publicly available data provided by rideshare companies in the city of Chicago~\cite{TNP2023}, we analyzed all taxi requests between January 2023 and March 2023, collecting travel times, taxi fares, as well as pickup and dropoff locations. The dataset is conveniently divided into $77$ city areas, with the vertiport and ORD airport located at area $\#31$ and area $\#75$, respectively. Accordingly, we set the number of unique orders in our problem to $153$, representing all trips to and from the airport.\footnote{We assumed that trips which start and end in area $\#75$ were departing at the airport.} We set the total demand roughly equal to the average rate of $1,550$ orders per hour reported in the dataset, and distributed this proportionally among the orders. 
	
	For travel time $\tlij$ by car, we used the mean of all reported travel times between respective city areas. When considering trips via eVTOL aircrafts, $10$ minutes of flight time was added to the travel time required to get to and from the vertiport by car~\cite{archer}. To calculate pickup time $\plij$, we computed all the parameters required in Eq.~\eqref{eq:pick-up time}. The number of couriers $\couriercap$ was directly chosen for each instance so that the problem was feasible under the utilization capacities $\Bar{\rho}_\mode$, with the additional consideration that each eVTOL aircrafts would carry $4$ passengers~\cite{archer}. For both car modalities we set $\Bar{\rho}_\mode$ to $0.95$, while for eVTOL aircrafts we decreased this value to $0.75$ due to the smaller number of vehicles utilized. To estimate the portion of available car taxis near each city area $\bij$, we assumed that the higher densities of drivers were located in areas with higher demand. To model this, we exponentially scaled $\bij$ between $0.1$ and $0.3$ according to the demand at the corresponding city area. We set $\minutes$ to $5$ minutes for both car modalities, and $0$ minutes for eVTOL aircrafts as they do not need to travel to the pickup location. Service time $\slij$ was set to $0$ minutes for all modalities. Using these parameters, we estimated the mean rate $\drate_\mode$ of order completions as the inverse of expected latency $\mathbb{E}_{i}[\lij]^{-1}$ for each modality $\mode$, assuming load was equally distributed across them. Based on estimated driver wages in Chicago reported by Uber~\cite{UberPay}, we set the hourly wages $\bar{c}_\mode$ for standard cars and luxury cars to $\$25$ and $\$40$ per hour, respectively\footnote{The reported average wage was $\$28.03$ per hour for all drivers}. For eVTOL aircrafts we assumed an operational cost of $\$200$ per hour for each vehicle, meaning $\$50$ for each passenger.\par 
	
	Unlike the previous case study, we rely on Eq.~\ref{eq:extended_main} to derive our prices. Hence, in addition to modeling the heterogeneity in individuals' value of time (VOT), we assumed that more luxurious transportation modes are more valuable to the user. Previous studies have shown that users are willing to pay more for air taxi transportation compared to ground transportation~\cite{FU2019, binder_2018}. To represent this, for each order $\order$ we computed the mean VOT based on the dataset, and scaled it by $2, 1.5,$ and $1$ for eVTOL aircrafts, luxury cars, and standard cars, respectively. Furthermore, we assumed that the most frugal users at $\user=1$ would prefer the cheapest option regardless of the modality, and set $\pop_{\order,\mode}(1)$ for all modes $\mode$ equal to one standard deviation below the computed mean VOT for that order $\order$. To connect these two points for each order $\order$ and modality $\mode$, we used a straight line, making our overall model linear with larger slopes representing more luxurious modalities. We visualize this in Fig.~\ref{fig:alpha schematic}. Note that because of the common intercepts at $\pop_{\order,\mode}(1)$ for all modalities $\mode$ and the constant slopes, the assumption in Corollary~\ref{corr: ex_network_flow} is satisfied. Furthermore, by not offering users travel via eVTOL aircrafts when they are slower than the other two options, we can strengthen Corollary~\ref{corr: ex_network_flow} by guaranteeing that any desired flow is indeed inducible under some well behaved equilibrium flow. This result is shown in \ref{sec: app_case_study}. We make this aforementioned restriction in our formulation, and also explicitly check that the Nash equilibrium conditions of our solutions are indeed satisfied under the derived prices.\par
	
	\noindent\textbf{Results.}
	\begin{table}[!t]
		\centering
		\caption{Results for the airport taxi system with $990$ standard vehicles and $110$ luxury vehicles, with a total operational cost of $\$23,100$ per hour.}
		\label{tab:result_3}
		\centering
		\begin{tabular}{lccc}
			\toprule
			& Standard & Luxury & Total\\
			\cmidrule(r){1-1}\cmidrule(r){2-3}\cmidrule(r){4-4}
			Orders $(\%)$ & $91$ & $9$ & $100$\\
			Utilization $\rho_j$ $(\%)$ & $94$ & $85$ & $93$\\
			Latency $\lat_j$ (min) & $35$ & $35$ & $35$\\
			Price $\tax_j$ $(\$)$ & $41.35$ & $60.63$ & $43.06$\\
			Profit $(1,000\$$ per hour) & $38.60$ & $5.04$ & $43.59$\\
			\bottomrule 
		\end{tabular}
	\end{table}
	
	\begin{table}[!t]
		\centering
		\caption{Results for the airport taxi system with $900$ standard vehicles, $100$ luxury vehicles, and $25$ eVTOL aircrafts, with a total operational cost of $\$26,000$ per hour.}
		\label{tab:result_4}
		\centering
		\begin{tabular}{lcccc}
			\toprule
			& Standard & Luxury & eVTOL & Total\\
			\cmidrule(r){1-1}\cmidrule(r){2-4}\cmidrule(r){5-5}
			Orders $(\%)$ & $83$ & $8$ & $9$ & $100$\\
			Latency $\lat_j$ (min) & $35$ & $30$ & $23$ & $34$\\
			Utilization $\rho_j$ $(\%)$ & $94$ & $87$ & $75$ & $92$\\
			Price $\tax_j$ $(\$)$ & $40.96$ & $42.50$ & $112.15$ & $47.26$\\
			Profit $(1,000\$$ per hour) & $34.80$ & $2.38$ & $10.08$ & $47.26$\\
			\bottomrule 
		\end{tabular}
	\end{table}
	We first consider the case when there are only $1100$ car couriers available, with $110$ luxury vehicles and $990$ standard vehicles. We show the result in Table~\ref{tab:result_3}, listing the portion of orders delivered and the courier utilization as percentage, the average latency $\lat_j$ for modality $\mode$ in minutes, the delivery price in dollars, and finally the total profit in $1000$'s of dollars per hour. For this setting, the minimum price was set so that users taking a standard taxi would pay their expected amount. Recall that setting the minimum price does not alter the Nash equilibrium since demand is inelastic, and hence we set it accordingly to reflect current prices experienced by users. We see this reflected in the results, as customers are expected to pay $\$41.35$ for a standard cab, and $\$60.63$ for a luxury cab on average. We expect this since both vehicles have identical travel times, but the mean VOT for luxury vehicles $50\%$ higher compared to standard vehicles. This shows the importance of modeling VOT separately for different transportation modes, as such an effect would not be observed otherwise.\par  
	
	Next we consider the case when $25$ eVTOL aircrafts are introduced into the system, displayed in Table~\ref{tab:result_4}. As expected, we can see that this premium transportation option comes with a high cost of $\$112.15$ per order. Although services provided by standard vehicles are mostly unaffected, demand for the luxury vehicle option is severely impacted. Specifically, luxury vehicles have a harder time competing and are required to lower their price from $\$60.63$ to $\$42.50$. In addition, we see their average trip latency drop from $35$ to $30$ minutes, signifying that commuters located further away are more willing to pay the price for the eVTOL aircraft option. With the higher costs required to operate luxury vehicles, we can expect such competition to greatly affect profits provided by this service, most likely lowering the expected wage for luxury vehicle drivers.\par
	
	This case study outlines the importance of modeling VOT separately for different transportation modalities, as this can point out potential pitfalls in market strategies as new transportation modalities are introduced into the rideshare ecosystem. As we observed, even though eVTOL aircrafts only took $9\%$ of the total orders, they had a large affect on the profitability of already existing luxury transportation options.\par 
	
	\section{Conclusion}\label{sec: Conclusion}
	We model the pickup and delivery problem with multiple transportation modalities as a congestion game played over a star network, and show that we can explicitly define prices to induce any desired network flow. With this framework, we construct case studies for both a meal delivery service and a taxi service. In the first setting, we show that by utilizing autonomous transportation methods which are more efficient, one can set prices according to users' trade-offs between money and time to induce a desired allocation strategy while improving their profit margins. The second setting considers the additional assumption that users' trade-offs may differ between transportation modalities, and shows that such consideration are crucial to predict trends as new transportation modalities are introduced. We go over some of the implications of our work, pointing out limitations and directions for improvement.\par
	
	We first note that in the setting of non-atomic congestion games taking place on graphs composed of one source-sink pair, prior works have asked if a feasible solution can be found to compute optimal prices for edges combinatorially, without relying on LP formulations~\cite{roughgarden2003a}. Our main theoretical result states that in these settings, one can define optimal prices for paths combinatorially, implying that the LP formulation used to find prices for edges can be simplified. This points to the possibility that other network structures inherit properties which allow one to find prices efficiently, and we leave this direction for future works.\par
	
	Further, we point out that our case studies only provide two examples where such a model is useful. Due to the general construction of the congestion game defined, our analysis is practical for any application that utilizes a platform to price match customers with different transportation methods. Since our formulation poses little restriction on the latency function defined, one can construct a model that is suitable for the desired application. Of course our framework gives no guarantees on finding the optimal allocation strategy, and instead provides a method by which prices can be set to induce a desired strategy.\par 
	
	One direction for future work is relaxing our formulation to allow for elastic demand. Such a consideration is interesting as it would permit one to optimize for profits directly, since the minimum price set would now affect the total user demand. The difficulty of such an extension lies in the ordering permutations required to compute the delivery prices. In this setting, additional assumptions may be required as one would need to compute derivative information while keeping track of ordering permutations that depend directly on the decision variable.\par 
	
	Lastly, we want to comment on the ethical implications of our first case study. On the positive side, our results show that by utilizing more autonomous transportation methods one can improve profit margins. However, this is true because our model considers car couriers operated by humans as less cost efficient. While one may have financial incentives to substitute part of their current workforce with autonomous machines, other decisions can be made that improve wages and work conditions for employees. Such a discussion is beyond the scope of our work, and is a topic that should be carefully addressed by policy makers before corporations are allowed to make decision that greedily improve their profits.\par
	
	\section{Acknowledgements}
	This work was supported by the National Science Foundation [award numbers 1952920 and 2145134].
	
	\bibliographystyle{IEEEtran}
	\bibliography{IEEEabrv,refs.bib}
	\newpage
	\appendix
	\section{Proof of Proposition 2}\label{sec: app_prop_proof}
	As stated by prior work, the proof of Proposition~\ref{prop: ex_canon} (Proposition 2.4 in~\cite{roughgarden2003a}) can be arrived at using a rearrangement argument, showing the stronger statement that an arbitrary Nash flow can be reorganized into a canonical one without changing the disutility incurred by any agent or the induced flow on paths. As this proof is ommited from the aforementioned work, we provide an independent result for completeness.\par 
	
	Our proof strategy is as follows: assuming that $x$ is an arbitrary Nash flow, we use the inequalities defined by a Nash flow in Eq.~\ref{eq:Nash_condition} to show that the additional inequalities defined by a canonical Nash flow in Definition~\ref{def:canon} must follow. The existence of canonical Nash flows follows directly from the well known result stated in Proposition~\ref{prop: ex_nash}. Note that for sake of notation, we drop the subscript referring to orders $\order\in\Orders$, as it should be clear that the proof applies to an individual source-sink pair.\par
	
	Formally, $x$ is an equilibrium flow for instance $(\pop,\lat,\tax)$ if for all edges $\mode\in\setmodes$, no user $\user$ travelling on edge $\mode$ should want to switch to any other edge $\mode'\in\setmodes$:
	\begin{equation}\label{eq:proof_prop_1}
		\lat_\mode+\pop(\user)\tax_\mode \leq \lat_{\mode'}+\pop(\user)\tax_{\mode'} \quad\forall\user\in\{\user:x(\user)=\mode\},
	\end{equation}
	where we leave out denoting the flow $x$ in latency $\lat_\mode(x)$. Given an arbitrary Nash flow $x$ for instance $(\pop,\lat,\tax)$, we would like to show that for any two users $\user_1,\user_2\in[0,1]$ where $\user_1<\user_2$: $\lat_{x(\user_1)}\leq\lat_{x(\user_2)}$, and $\tax_{x(\user_1)}\geq\tax_{x(\user_2)}$ must hold.\par 
	
	Clearly, when users $\user_1$ and $\user_2$ are routed on the same edge $\mode$, the aforementioned inequalities hold. To show that they hold in general, we assume that user $\user_1$ is routed on edge $\mode$, $\user_1\in\{\user:x(\user)=\mode\}$, and user $\user_2$ is routed on edge $\mode'$, $\user_2\in\{\user:x(\user)=\mode'\}$, where $\mode\neq\mode'$. It follows directly from Eq.~\eqref{eq:proof_prop_1} that:
	
	\begin{equation}\label{eq:proof_prop_2}
		\lat_{\mode}-\lat_{\mode'} \leq \pop(\user_1)(\tax_{\mode'}-\tax_{\mode}),
	\end{equation}
	and similarly, 
	\begin{equation}\label{eq:proof_prop_3}
		\lat_{\mode}-\lat_{\mode'} \geq \pop(\user_2)(\tax_{\mode'}-\tax_{\mode}).
	\end{equation}
	
	Given $\pop(\user)\geq0$ $\forall\user\in[0,1]$ and $\pop(\user_1)\leq\pop(\user_2)$ from definition, we can infer from the above two inequalities that $\tax_{\mode}\geq\tax_{\mode'}$ and $\lat_{\mode}\leq\lat_{\mode'}$. This is trivially shown by contradiction: assume that $\tax_{\mode'}-\tax_{\mode}$ is positive, and divide by it on both sides of the inequalities to arrive at the contradiction $\pop(\user_2)\leq\frac{\lat_{\mode}-\lat_{\mode'}}{\tax_{\mode'}-\tax_{\mode}}\leq\pop(\user_1)$, implying that $\tax_{\mode'}-\tax_{\mode}$ and $\lat_{\mode}-\lat_{\mode'}$ are negative. Since $\lat_{x(\user_1)}\leq\lat_{x(\user_2)}$ and $\tax_{x(\user_1)}\geq\tax_{x(\user_2)}$ must hold for any two users $\user_1,\user_2\in[0,1]$ where $\user_1<\user_2$, it follows directly that for any edge $\mode\in\Modes$, the users assigned to edge $\mode$ by a flow $x$ at Nash equilibrium form a (potentially empty or degenerate) subinterval of $[0,1]$. This completes the proof.  
	\section{Proof of Theorem 1}\label{sec: app_proof}
	Note that for sake of notation, we will drop the subscript referring to orders $\order\in\Orders$, as it should be clear that the proof applies to an individual source-sink pair. In addition, we assume that indexes $\mode\in\Modes:\{1,\ldots,\numModes\}$ correspond to the set of edges sorted by non-decreasing latency. Note that throughout our proof, we apply Proposition~\ref{prop: ex_canon} which allows us to assume that any Nash flow $x$ is a canonical Nash flow, as demonstrated in ~\ref{sec: app_prop_proof} above.\par
	
	We define two adjacent intervals that are formed by our flow $x$: users $\user\in[\user_{\mode-1},\user_\mode]$ on the left experience latency $\lat_\mode$ and price $\tax_\mode$, while users $\user\in[\user_{\mode},\user_{\mode+1}]$ on the right experience latency $\lat_{\mode+1}$ and price $\tax_{\mode+1}$. The two intervals are portrayed in Fig.~\ref{fig:proof intervals}, where we note that this definition holds for $\mode\in\{1,\ldots,\numModes-1\}$. Using the inequalities defined in Eq.~\eqref{eq:Nash_condition}, we know that for $x$ to be a (canonical) Nash flow for instance $(\pop,\lat,\tax)$, no user $\user$ from the left interval $\user\in[\user_{\mode-1},\user_\mode]$ should want to switch to the delivery option corresponding to the right interval:
	
	\begin{equation}\label{eq:left_to_right}
		\lat_\mode+\pop(\user)\tax_\mode \leq \lat_{\mode+1}+\pop(\user)\tax_{\mode+1} \quad\forall\user\in[\user_{\mode-1},\user_\mode],
	\end{equation}
	where we leave out denoting the flow $x$ in latency $\lat_\mode(x)$. It follows:
	
	\begin{align}
		\tax_\mode -\tax_{\mode+1} &\leq \frac{\lat_{\mode+1}-\lat_{\mode}}{\pop(\user)} \quad\forall\user\in[\user_{\mode-1},\user_\mode],\\
		\tax_\mode -\tax_{\mode+1} &\leq \min_{\user\in[\user_{\mode-1},\user_\mode]}\Big({\frac{\lat_{\mode+1}-\lat_{\mode}}{\pop(\user)}}\Big).
	\end{align}
	
	The preceding inequality can be simplified further by using the non-decreasing property of function $\pop$ defining the population's price sensitivity: for any $\user_1,\user_2\in[0,1]$ such that $\user_1\leq\user_2$, given user $\user\in[\user_1,\user_2]$, $\max{\pop(\user)}=\pop(\user_2)$ and $\min{\pop(\user)}=\pop(\user_1)$. This comparison results in the following condition which must be true for $x$ to be a Nash flow:   
	\begin{equation}\label{eq:left_ineq}
		\tax_\mode -\tax_{\mode+1} \leq {\frac{\lat_{\mode+1}-\lat_{\mode}}{\pop(\user_\mode)}}.
	\end{equation}
	
	We can repeat this process by enforcing that no user $\user$ from the right interval $\user\in[\user_{\mode},\user_{\mode+1}]$ should want to switch to the edge on the left:
	
	\begin{align}
		\lat_{\mode+1}+\pop(\user)\tax_{\mode+1} &\leq \lat_{\mode}+\pop(\user)\tax_{\mode} \quad\forall\user\in[\user_{\mode},\user_{\mode+1}],\\
		\tax_\mode -\tax_{\mode+1} &\geq \max_{\user\in[\user_{\mode},\user_{\mode+1}]}\Big({\frac{\lat_{\mode+1}-\lat_{\mode}}{\pop(\user)}}\Big),
	\end{align}
	which results in the following:
	
	\begin{equation}\label{eq:right_ineq}
		\tax_\mode -\tax_{\mode+1} \geq {\frac{\lat_{\mode+1}-\lat_{\mode}}{\pop(\user_\mode)}}.
	\end{equation}
	
	From \eqref{eq:left_ineq} and \eqref{eq:right_ineq} we can see that the two inequalities force the set of prices $\settax$ to follow:
	\begin{equation}\label{eq:proof_eq}
		\tax_\mode -\tax_{\mode+1} = {\frac{\lat_{\mode+1}-\lat_{\mode}}{\pop(\user_\mode)}} \quad\forall \mode\in\{1,\ldots,\numModes-1\},
	\end{equation}
	where if $\tax_{\numModes}$ is given, the rest of the prices can be found recursively as defined in Eq~\eqref{eq:main_result}.\par
	
	To complete the proof, we must show that for this set of prices $\tax$, the desired $x$ is indeed an equilibrium flow. Formally, $x$ is an equilibrium flow for instance $(\pop,\lat,\tax)$ if for all edges $\mode\in\setmodes$ no user $\user$ in interval $\user\in[\user_{\mode-1},\user_\mode]$ should want to switch to any other edge $\mode'\in\setmodes$:
	\begin{equation}\label{eq:proof_all_ineq}
		\lat_\mode+\pop(\user)\tax_\mode \leq \lat_{\mode'}+\pop(\user)\tax_{\mode'}.
	\end{equation}
	Clearly, these inequalities hold when $\mode=\mode'$, and hence we show that they hold when $\mode>\mode'$ and $\mode<\mode'$. Starting with the former, when $\mode>\mode'$ we are considering that no user choosing edge $\mode$ will switch to any edge $\mode'$ on the left, where by definition $\tax_\mode\leq\tax_{\mode'}$ and $\lat_\mode\geq\lat_{\mode'}$. Rearranging Eq.~\ref{eq:proof_all_ineq}, we have the following for all edges $\mode>\mode'$:
	\begin{align*}
		&\lat_\mode+\pop(\user)\tax_\mode \leq \lat_{\mode'}+\pop(\user)\tax_{\mode'} \quad\forall\user\in[\user_{\mode-1},\user_\mode],\\
		&\tax_{\mode'} -\tax_\mode \geq \max_{\user\in[\user_{\mode-1},\user_{\mode}]}\Big(\frac{\lat_\mode-\lat_{\mode'}}{\pop(\user)}\Big),\\
		&\sum_{k=\mode'}^{\numModes-1}\frac{\lat_{k+1}-\lat_{k}}{\pop(\user_k)}-\sum_{k=\mode}^{\numModes-1}\frac{\lat_{k+1}-\lat_{k}}{\pop(\user_k)}\geq \frac{\lat_\mode-\lat_{\mode'}}{\pop(\user_{\mode-1})},\\
		&\sum_{k=\mode'}^{\mode-1}\frac{\lat_{k+1}-\lat_{k}}{\pop(\user_k)}\geq \sum_{k=\mode'}^{\mode-1}\frac{\lat_{k+1}-\lat_{k}}{\pop(\user_{\mode-1})}.
	\end{align*}
	Since $\pop(\user_{\mode-1})\geq\pop(\user_k)$ when $\mode'\leq k \leq \mode-1$, every summation term on the left hand side is strictly greater than or equal to every summation term on the right hand side, validating the inequalities in Eq.~\ref{eq:proof_all_ineq} for $\mode>\mode'$. We can do the same for $\mode<\mode'$, where now $\tax_\mode\geq\tax_{\mode'}$ and $\lat_\mode\leq\lat_{\mode'}$:
	\begin{align*}
		&\lat_\mode+\pop(\user)\tax_\mode \leq \lat_{\mode'}+\pop(\user)\tax_{\mode'} \quad\forall\user\in[\user_{\mode-1},\user_\mode],\\
		&\tax_\mode -\tax_{\mode'} \leq \min_{\user\in[\user_{\mode-1},\user_{\mode}]}\Big(\frac{\lat_{\mode'}-\lat_\mode}{\pop(\user)}\Big),\\
		&\sum_{k=\mode}^{\numModes-1}\frac{\lat_{k+1}-\lat_{k}}{\pop(\user_k)}-\sum_{k=\mode'}^{\numModes-1}\frac{\lat_{k+1}-\lat_{k}}{\pop(\user_k)}\leq \frac{\lat_\mode-\lat_{\mode'}}{\pop(\user_{\mode})},\\
		&\sum_{k=\mode}^{\mode'-1}\frac{\lat_{k+1}-\lat_{k}}{\pop(\user_k)}\leq \sum_{k=\mode}^{\mode'-1}\frac{\lat_{k+1}-\lat_{k}}{\pop(\user_{\mode})}.
	\end{align*}
	This time, since $\pop(\user_{\mode})\leq\pop(\user_k)$ when $\mode\leq k \leq \mode'-1$, every summation term on the left hand side is strictly less than or equal to every summation term on the right hand side. This completes the proof.
	
	\begin{remark}
		Although Equation~\ref{eq:main_result} in Theorem~\ref{thm:pricing condition} defines prices for parallel edges, this is equivalent to finding prices for paths in general directed graphs composed of one source-sink pair. Briefly consider a directed graph $G=(V,E)$ with source $s$ and sink $t$, with edges $e\in E$ and simple $s-t$ paths $P\in\mathcal{P}$. Since the desired flow $x:[0,1]\mapsto\mathcal{P}$ is provided, the path latency can be found using $\ell_P(x)=\sum_{e\in P}\ell_e(x_e)$, where the edge flow is given by $x_e=\sum_{P:e\in P}x_P$. The proofs for Proposition 2 and Theorem 1 stated above follow directly by replacing edges $\mode\in\Modes$ with paths $P\in\mathcal{P}$. Note that this does not guarantee that there exists a unique set of additive edge prices $\tax_e$ such that $\tax_P=\sum_{e\in P}\tax_e$ is true for all paths $P\in\mathcal{P}$. Since users choosing between delivery modes can be represented by parallel edges, we forego defining paths in our formulation to be concise.\end{remark}
	

	\section{Proof of Corollary 2}\label{sec: app_corr_proof}
	The proof strategy is as follows: similar to the proof of Theorem~\ref{thm:pricing condition}, using a subset of the inequalities defined for Nash equilibrium in Eq.~\eqref{eq:Nash_condition}, we show that for some desired Nash flow $\setflow$ there is only one set of valid prices $\tax_\order$ that satisfies this subset of inequalities. Note that we drop the subscript referring to orders $\order\in\Orders$, as it should be clear that the proof applies to an individual source-sink pair. In addition, we assume that indexes $\mode\in\Modes:\{1,\ldots,\numModes\}$ are ordered to satisfy the following assumption:
	
	\begin{equation}\label{eq: assumption_alpha}
		\textrm{Given }\user\leq\user': 
		\frac{\pop_{\mode+1}(\user)}{\pop_{\mode}(\user)}\geq \frac{\pop_{\mode+1}(\user')}{\pop_{\mode}(\user')} \qquad \forall\mode\in\{1,\ldots,\numModes-1\},
	\end{equation}
	where $\pop$ is a set of non decreasing functions $\pop_{\mode}:[0,1]\rightarrow(0,\infty)$ for $\mode\in\setmodes$.\par
	
	As before, we define two adjacent intervals that are formed by our desired flow $x$: users $\user\in[\user_{\mode-1},\user_\mode]$ on the left experience latency $\lat_\mode$ and price $\tax_\mode$, while users $\user\in[\user_{\mode},\user_{\mode+1}]$ on the right experience latency $\lat_{\mode+1}$ and price $\tax_{\mode+1}$. The two intervals are no longer guaranteed to exhibit the properties of a canonical Nash flow portrayed in Fig.~\ref{fig:proof intervals}. Nonetheless, using the inequalities defined in Eq.~\eqref{eq:Nash_condition}, we know that for $x$ to be a Nash flow for instance $(\pop,\lat,\tax)$, no user $\user$ from the left interval $\user\in[\user_{\mode-1},\user_\mode]$ should want to switch to the delivery option corresponding to the right interval:
	
	\begin{equation}\label{eq:corr_left_to_right}
		\lat_\mode+\pop_{\mode}(\user)\tax_\mode \leq \lat_{\mode+1}+\pop_{\mode+1}(\user)\tax_{\mode+1} \quad\forall\user\in[\user_{\mode-1},\user_\mode],
	\end{equation}
	where we leave out denoting the flow $x$ in latency $\lat_\mode(x)$. It follows:
	
	\begin{align}
		\tax_\mode -\frac{\pop_{\mode+1}(\user)}{\pop_{\mode}(\user)}\tax_{\mode+1} &\leq \frac{\lat_{\mode+1}-\lat_{\mode}}{\pop_{\mode }(\user)} \quad\forall\user\in[\user_{\mode-1},\user_\mode],\\
		\max_{\user\in[\user_{\mode-1},\user_\mode]}\Big(\tax_\mode -\frac{\pop_{\mode+1}(\user)}{\pop_{\mode}(\user)}\tax_{\mode+1}\Big) &\leq \min_{\user\in[\user_{\mode-1},\user_\mode]}\Big(\frac{\lat_{\mode+1}-\lat_{\mode}}{\pop_{\mode }(\user)}\Big).
	\end{align}
	The LHS of the above inequality can be simplified using the assumption made in Eq.~\ref{eq: assumption_alpha}, whereas the LHS can be simplified as before by using the non-decreasing property of function $\pop_\mode$ defining the population's price sensitivity. This results in the following condition which must be true for $x$ to be a Nash flow:   
	
	\begin{equation}\label{eq:corr_left_ineq}
		\tax_\mode -\frac{\pop_{\mode+1}(\user_\mode)}{\pop_{\mode}(\user_\mode)}\tax_{\mode+1} \leq \frac{\lat_{\mode+1}-\lat_{\mode}}{\pop_{\mode }(\user_\mode)}.
	\end{equation}
	
	We can repeat this process by enforcing that no user $\user$ from the right interval $\user\in[\user_{\mode},\user_{\mode+1}]$ should want to switch to the edge on the left:
	
	\begin{align}
		\lat_\mode+\pop_{\mode}(\user)\tax_\mode &\geq \lat_{\mode+1}+\pop_{\mode+1}(\user)\tax_{\mode+1} \quad\forall\user\in[\user_{\mode},\user_{\mode+1}],\\
		\min_{\user\in[\user_{\mode},\user_{\mode+1}]}\Big(\tax_\mode - \frac{\pop_{\mode+1}(\user)}{\pop_{\mode}(\user)}\tax_{\mode+1}\Big) &\geq \max_{\user\in[\user_{\mode},\user_{\mode+1}]}\Big({\frac{\lat_{\mode+1}-\lat_{\mode}}{\pop_\mode(\user)}}\Big),
	\end{align}
	
	which results in the following:
	
	\begin{equation}\label{eq:corr_right_ineq}
		\tax_\mode -\frac{\pop_{\mode+1}(\user_\mode)}{\pop_{\mode}(\user_\mode)}\tax_{\mode+1} \geq \frac{\lat_{\mode+1}-\lat_{\mode}}{\pop_{\mode }(\user_\mode)}.
	\end{equation}
	
	From \eqref{eq:corr_left_ineq} and \eqref{eq:corr_right_ineq} we can see that the two inequalities force the set of prices $\settax$ to follow:
	\begin{equation}\label{eq:corr_proof_eq}
		\tax_\mode -\frac{\pop_{\mode+1}(\user_\mode)}{\pop_{\mode}(\user_\mode)}\tax_{\mode+1} = \frac{\lat_{\mode+1}-\lat_{\mode}}{\pop_{\mode }(\user_\mode)} \quad\forall \mode\in\{1,\ldots,\numModes-1\},
	\end{equation}
	where if $\tax_{\numModes}$ is given, the rest of the prices can be found recursively. This completes the proof.\par
	
	\begin{remark}
		Note that unlike Theorem~\ref{thm:pricing condition}, we can not generalize the result of Corollary~\ref{corr: varying modality} to apply for any desired flow. Since we can not rely on Proposition~\ref{prop: ex_canon} to order the edges as we do for a canonical Nash flow, we can not show that $\tax_\order$ does indeed satisfy all of the inequalities defined in Eq.~\eqref{eq:Nash_condition}. Hence we only prove that for a desired flow $\setflow$ to be an equilibrium flow, the prices $\settax$ must follow Eq.~\ref{eq:corr_proof_eq}.
	\end{remark}
	
	\section{Generalization of Corollary~\ref{corr: varying modality}}\label{sec: app_case_study}
	
	As mentioned in Section~\ref{sec: Case Studies}, by not offering users travel via eVTOL aircrafts when they are slower than the other two options, we can guarantee that any desired flow is indeed inducible by a well behaved equilibrium flow. More precisely, we assume that there are three modalities $\numModes=3$, and the most premium option $\mode=1$ must have the smallest latency $\lat_{\order,1}\leq\lat_{\order,2},\lat_{\order,3}$ for all considered orders $\order$. Given this, we can show that all of the inequalities defined for Nash equilibrium in Eq.~\eqref{eq:Nash_condition} can be satisfied under some choice of the cheapest price $\tax_{\order,3}$, which is free for us to define. We proceed to show this, dropping the subscript referring to orders $\order\in\Orders$ as it should be clear that the proof applies to an individual source-sink pair.\par
	
	First recall that the result of Corollary~\ref{corr: varying modality} directly satisfies the inequalities defined for Nash equilibrium when adjacent edges are considered. In otherwords, the prices given by Eq.~\ref{eq:extended_main} guarantee that users will not switch from luxury cars $\mode=2$ to standard cars $\mode=3$ or vice-versa, as well as luxury cars $\mode=2$ to eVTOL aircrafts $\mode=1$ or vice-versa. This result is shown in \ref{sec: app_corr_proof} for the general setting. Since we only have three edges in this setting, two inequalities are left to check. No user $\user$ from the leftmost interval $\user\in[0,\user_1]$ corresponding to eVTOL aircrafts $\mode=1$ should want to switch the right most interval corresponding to standard cars $\mode=3$:
	
	\begin{equation}\label{eq:case_left_to_right}
		\lat_1+\pop_{1}(\user)\tax_1 \leq \lat_{3}+\pop_{3}(\user)\tax_{3} \quad\forall\user\in[0,\user_1],
	\end{equation}
	
	and vice-versa:
	
	\begin{equation}\label{eq:case_right_to_left}
		\lat_1+\pop_{1}(\user)\tax_1 \geq \lat_{3}+\pop_{3}(\user)\tax_{3} \quad\forall\user\in[\user_2,1].
	\end{equation}
	
	As done in \ref{sec: app_corr_proof}, the above two inequalities can be simplified to remove the intervals, and combined to form the following:
	
	\begin{equation}\label{eq:case_full_1}
		\frac{\pop_1(\user_1)}{\pop_3(\user_1)}\tax_1 - \frac{\lat_3-\lat_1}{\pop_3(\user_1)} \leq \tax_3 \leq \frac{\pop_1(\user_2)}{\pop_3(\user_2)}\tax_1 - \frac{\lat_3-\lat_1}{\pop_3(\user_2)}. 
	\end{equation}
	
	Note that since $\lat_3-\lat_1\geq0$ and $\pop_3(\user_1)\leq\pop_3(\user_2)$, we have  $-\frac{\lat_3-\lat_1}{\pop_3(\user_1)}\leq-\frac{\lat_3-\lat_1}{\pop_3(\user_2)}$. In addition, we have $\frac{\pop_1(\user_1)}{\pop_3(\user_1)}\leq\frac{\pop_1(\user_2)}{\pop_3(\user_2)}$ due to the assumption stated in Corollary~\ref{corr: varying modality}. This means that the domain of possible values for $\tax_3$ is not degenerate, meaning we can set $\tax_3$ to satisfy both Eq.~\ref{eq:case_left_to_right} and Eq.~\ref{eq:case_right_to_left} simultaneously. In practice, we set $\tax_3$ to reflect current prices as mentioned in Section~\ref{sec: Case Studies}, and verify that it is within the permissible range.\par 
	
	\section{Implementation Details}\label{sec: app_implementation}
	We use a public implementation of the interior-point filter line-search algorithm~\cite{wachter2006implementation}, and integrate it with the dataset of Grubhub instances~\cite{reyes2018meal} and Chicago taxi services~\cite{TNP2023} using Python. We briefly outline the results needed to implement the algorithm, and provide our code online~\cite{my_implementation}.
	
	First we define derivative information for the latency, where we use the fact that $\frac{d\rho_j(x)}{dx_{i',j'}}=\frac{1_{[j=j']}}{\couriercap_j\drate_j}$.
	\begin{align}
		\lij &= \slij + \tlij + \minutes_j[1+\bij\couriercap_j(1-\rho_j(x))]^{-1},\\
		\frac{d\lij}{d\xijp} &= 1_{[j=j']}\frac{\bij\minutes_j}{\drate_j}[1+\bij\couriercap_j(1-\rho_j(x))]^{-2},\\
		\frac{d^2\lij}{d\xijpp d\xijp} &= 1_{[j=j'=j'']}\frac{2\bij^2\minutes_j}{\drate_j^2}[1+\bij\couriercap_j(1-\rho_j(x))]^{-3}.
	\end{align}
	
	Now we can use this to get derivative information for the objective function.
	
	\begin{align}
		\tlat(x) &= \frac{1}{|\Orders|}\sum_{\order\in\Orders}\sum_{\mode\in\Modes}\lij\xij,\\
		\frac{d\tlat(x)}{d\xijp} &= \frac{1}{|\Orders|}\Bigg[\Big(\sum_{\order\in\Orders}\sum_{\mode\in\Modes}\frac{d\lij}{d\xijp}\xij\Big) + \lijp\Bigg],\\
		\frac{d^2\tlat(x)}{d\xijpp d\xijp} &= \frac{1}{|\Orders|}\Bigg[\Big(\sum_{\order\in\Orders}\sum_{\mode\in\Modes}\frac{d^2\lij}{d\xijpp d\xijp}\xij\Big) + \frac{d\lijpp}{d\xijp} + \frac{d\lijp}{d\xijpp}\Bigg].
	\end{align}
	
	Next, we derive derivative information for the two constraints separately, starting with the utility constraints in Eq.~\eqref{eq:con_server} which we denote as $g_j(x)\leq\Bar{\rho} \quad\forall \mode\in\Modes$. 
	
	\begin{align}
		g_\mode(x) &= \rho_\mode(x),\\
		\frac{dg_\mode(x)}{d\xijp} &= \frac{1_{[j=j']}}{\couriercap_j\drate_j},\\
		\frac{d^2g_\mode(x)}{d\xijpp d\xijp} &= 0.
	\end{align}
	
	Finally, we denote the flow constraints in Eq.~\eqref{eq:con_flow} as $h_i(x)=1 \quad\forall \order\in\Orders$, listing the derivative information below.
	
	\begin{align}
		h_\order(x) &= \sum_{\mode\in\Modes}\xij,\\
		\frac{dh_\order(x)}{d\xijp} &= 1_{[i=i']},\\
		\frac{d^2h_\order(x)}{d\xijpp d\xijp} &= 0.
	\end{align}
	
	Note that the cost constraint in Eq~\eqref{eq:con_cost} is not required for optimization as the minimum price can be manually set to satisfy it.
\end{document}